\newtheorem{definition}{Definition}
\newtheorem{lemma}{Lemma}
\newtheorem{theorem}{Theorem}
\newtheorem{proposition}{Proposition}
\newtheorem{corollary}{Corollary}
\def\is{\textsc{max independent set}}
\def\vc{\textsc{min vertex cover}}
\def\ds{\textsc{min dominating set}}
\def\sp{\textsc{max set packing}}
\def\cb{\textsc{max complete bipartite subgraph}}
\def\msat{\textsc{min sat}}
\def\fvs{\textsc{min feedback vertex set}}
\def\ids{\textsc{min independent dominating set}}
\def\lcol{\textsc{max $\ell$-colorable induced subgraph}}
\def\mus{\textsc{max unused sets}}
\def\m3sat{\textsc{max 3-sat}}
\def\mmvc{\textsc{max minimal vertex cover}}
\def\ps{\textsc{max induced planar subgraph}}
\def\sol{\mathrm{sol}}
\def\opt{\mathrm{opt}}
\def\argmax{\mathrm{argmax}}
\let\leq\leqslant
\title{\textbf{Sparsification and subexponential approximation}}
\author{\'Edouard Bonnet \hspace*{1cm} Vangelis~Th.~Paschos\footnote{Also, Institut Universitaire de France} \\
PSL Research University, Universit\'e Paris-Dauphine \\
LAMSADE, CNRS UMR 7243, France \\
\texttt{edouard.bonnet@dauphine.fr,paschos@lamsade.dauphine.fr}}
\begin{document}

\maketitle

\begin{abstract}

Instance sparsification is well-known in the world of exact computation since it is very closely linked to the \textit{Exponential Time Hypothesis}. In this paper, we extend the concept of sparsification in order to capture subexponential time approximation. We develop a new tool for inapproximability, called approximation preserving sparsification and use it in order to get strong inapproximability results in subexponential time for several fundamental optimization problems as \is{}, \ds{}, \fvs{}, and \textsc{min set cover}.

\end{abstract}

\section{Introduction}\label{intro}

The most common way to cope with intractability in complexity theory is the design and analysis of efficient approximation algorithms. 
The main stake of such algorithms is to ``fastly'' compute feasible solutions for the hard problems tackled (avoiding so, if possible, long and time-consuming computations needed for determining optimal solutions). 
The values of these solutions must be as ``close'' as possible to the optimal values. 

Historically, the first research program dealing with approximation, was the \textit{polynomial time approximation theory} founded back in~1974 with the seminal paper~\cite{jo}. 
Since the early~90's, using the celebrated PCP theorem~(\cite{motwaj}), numerous natural hard optimization problems have been proved to admit more or less pessimistic inapproximability results. 
For instance, for any $\epsilon>0$, \is{} is inapproximable within approximation ratio~$n^{\epsilon-1}$, unless $\textbf{P} = \textbf{NP}$~(\cite{zucker}). 
Similar results, known as \textit{inapproximability} or \textit{negative results}, have been provided for numerous other paradigmatic optimization problems. 

To remedy to this pessimistic context, two complementary research programs,  dealing with super-polynomial approximation, came to be added in the approximation landscape. 
%
The first one, called \textit{parameterized approximation}, handles approximation by fixed parameter algorithms. 
This line of research was initiated by three independent works~\cite{dofemcciwpec,caihuiwpec,chgrogruiwpec}. 
%
The second research program, called \textit{moderately exponential approximation}, 
seeks, given a problem~$\Pi$, for $r$-approximation algorithms with running time significantly faster than those of exact algorithms computing optimal solution for~$\Pi$. This issue has been independently developed by~\cite{effapproxcah,effapprox,CyganP10,FurerGK09}.

However, a fundamental question remained globally unanswered by both of them. Is \textit{subexponential} approximation possible for some paradigmatic optimization problems as, for instance, \is{}, \vc{}, or \ds{}? A first answer about \is{} and \vc{} has been provided in~\cite{subexpo-fpt-inapprox-ipec13} where it is proved the following.
\begin{theorem}\label{negipec13thm} \cite{subexpo-fpt-inapprox-ipec13}
Under ETH\footnote{The \textit{Exponential Time Hypothesis}~({ETH})~\cite{impa} postulates that there exists an $\epsilon > 0$ such that no algorithm solves \textsc{3-sat} in time~$O^*(2^{\epsilon n})$, where~$n$ is the number of variables. This is a widely-acknowledged computational assumption.}, in graphs of order~$n$:
\begin{enumerate}
\item\label{isneg} for any positive constant~$r$ and any $\delta>0$, there is no $r$-ap\-p\-r\-o\-x\-i\-ma\-ti\-on algorithm for \is{} running in time~$O^*(2^{n^{1-\delta}})$;
\item\label{vcneg} for any $\epsilon>0$ and any $\delta>0$, there is no $(\nicefrac{7}{6}-\epsilon)$-ap\-p\-ro\-x\-i\-ma\-ti\-on algorithm for \vc{} running in time~$O^*(2^{n^{1-\delta}})$.
\end{enumerate}
\end{theorem}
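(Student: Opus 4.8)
The plan is to derive the statement from ETH through a \emph{single} gap-producing reduction whose size blow-up is near-linear, so that a subexponential-time approximation algorithm is converted into a subexponential-time algorithm for \textsc{3-sat}. Three ingredients enter: (i) ETH together with the Sparsification Lemma, which rules out $2^{o(n)}$-time algorithms for \textsc{3-sat} already on instances with $n$ variables and $O(n)$ clauses; (ii) a near-linear-size probabilistically checkable proof with arbitrarily small (constant) soundness error, e.g.\ the almost-linear-size two-query PCP of Moshkovitz and Raz; and (iii) the classical \textsc{fglss}-type reduction from a gap instance of a bounded-alphabet constraint-satisfaction / label-cover problem to a gap instance of \is{}.

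For part~\ref{isneg}, fix a constant $r>1$ and a $\delta>0$ and suppose, for contradiction, that some algorithm $r$-approximates \is{} in time $O^*(2^{n^{1-\delta}})$ on $n$-vertex graphs. Starting from a \textsc{3-sat} formula on $n$ variables and $O(n)$ clauses, I would first apply ingredient~(ii) with soundness $\epsilon := 1/(2r)$: in polynomial time this yields a two-query/projection instance of size $n^{1+o(1)}$ over an alphabet of size $f(\epsilon)=O(1)$, which is fully satisfiable in the yes-case and at most $\epsilon$-satisfiable in the no-case. Feeding it into ingredient~(iii) produces a graph $G$ on $n^{1+o(1)}\cdot f(\epsilon)^{O(1)}=n^{1+o(1)}$ vertices with $\alpha(G)\ge s$ in the yes-case and $\alpha(G)\le \epsilon s<s/r$ in the no-case. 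Running the hypothetical algorithm on $G$ therefore returns an independent set of size $\ge\alpha(G)/r$, which exceeds $s/r$ precisely in the yes-case, so it decides the original formula; its running time is $O^*(2^{(n^{1+o(1)})^{1-\delta}})=2^{o(n)}$ because $(1+o(1))(1-\delta)<1$ for $n$ large, contradicting ingredient~(i). Since $r$ and $\delta$ were arbitrary, part~\ref{isneg} follows.

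Part~\ref{vcneg} follows the same template, the only change being that the required gap is the fixed constant $7/6$. Here I would start from a near-linear-size hard gap instance of an appropriate bounded-occurrence CSP (for instance \maxsat{} with a bounded number of occurrences per variable, whose APX-hardness is itself obtained by feeding a near-linear-size PCP into the usual expander/occurrence-reduction machinery) and then apply the standard local gadget reduction underlying the known $\mathbf{NP}$-hardness of approximating \vc{} within $7/6-\epsilon$. The only additional thing to verify is that this gadget reduction increases the instance size by at most a near-linear factor, which holds because it is a constant-size gadget substituted at each clause/variable. A $(7/6-\epsilon)$-approximation for \vc{} running in time $O^*(2^{n^{1-\delta}})$ would then decide the resulting gap instance, hence \textsc{3-sat}, in time $2^{o(n)}$, again contradicting ETH.

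The step I expect to be the main obstacle is the near-linearity of the size in (ii)--(iii). The naive route to a large constant gap for \is{} is the lexicographic/graph product, which turns a gap-$(1+\epsilon_0)$ instance on $N$ vertices into a gap-$r$ instance on $N^{\Theta(\log r)}$ vertices; a $2^{n^{1-\delta}}$-time $r$-approximation then only yields a $2^{N^{\Theta(\log r)(1-\delta)}}$-time decision procedure, whose exponent exceeds $N$ as soon as $\delta<1-1/\Theta(\log r)$, giving no contradiction with ETH for small $\delta$. Avoiding this polynomial blow-up is exactly why one is forced to use a PCP that reaches a large gap (equivalently, small soundness) while keeping the proof length almost linear; once such a PCP is available, the size bookkeeping above holds for all $\delta>0$ simultaneously, which is what the theorem asserts. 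For part~\ref{vcneg} the analogous delicate point is to obtain the exact constant $7/6$ together with a near-linear-size reduction.
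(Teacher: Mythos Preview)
This theorem is not proved in the present paper at all: it is quoted from \cite{subexpo-fpt-inapprox-ipec13} and used as a black box (the paper's own contribution is Theorems~\ref{nlred}--\ref{main2}, which build on it). So there is no ``paper's own proof'' to compare against here; the relevant comparison is with the argument in the cited source.

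Your plan is sound as a proof of the statement, and for Item~\ref{isneg} it is essentially complete: the combination of the Sparsification Lemma, the almost-linear two-query projection PCP of Moshkovitz--Raz with arbitrarily small constant soundness, and the FGLSS reduction does yield, for every fixed constant $r$, a gap-$r$ instance of \is{} on $n^{1+o(1)}$ vertices, and the arithmetic $(1+o(1))(1-\delta)<1$ is exactly what is needed. Note, however, that this is \emph{not} the route taken in \cite{subexpo-fpt-inapprox-ipec13}: that paper introduces the (constant-degree) approximation-preserving sparsifier that the present paper generalizes, and obtains Item~\ref{isneg} through sparsification plus gap amplification rather than by invoking Moshkovitz--Raz directly. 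The present paper even remarks that the use of \cite{DBLP:conf/focs/MoshkovitzR08} is what \cite{2013arXiv1308.2617C} added to \emph{improve} Item~\ref{isneg} to Theorem~\ref{Chalermsook13thm}. So your argument is a legitimate alternative, arguably cleaner once MR is taken for granted, but it is methodologically different from the cited one.

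For Item~\ref{vcneg} your outline is reasonable but you leave the crucial step implicit: you assert the existence of a near-linear-size gadget reduction that produces exactly the $\nicefrac{7}{6}$ gap for \vc{}. The ``standard'' clause-to-triangle reduction from a $(1,\nicefrac{7}{8}+\varepsilon)$ gap for \textsc{max-3sat} only yields a $\nicefrac{17}{16}$ gap for \vc{}, not $\nicefrac{7}{6}$; obtaining $\nicefrac{7}{6}$ requires H{\aa}stad's specific construction, and you would still have to argue that this construction can be made almost linear in size by starting from the MR PCP rather than from parallel repetition. That step can be carried out, but it is not the one-liner your sketch suggests, and it is precisely the place where the constant $\nicefrac{7}{6}$ is pinned down. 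In \cite{subexpo-fpt-inapprox-ipec13} the $\nicefrac{7}{6}$ arises from combining a linear-size APX gap for \vc{} (available via linear reductions from \textsc{3-sat}) with their approximation-preserving sparsifier; the present paper explicitly notes (proof of Theorem~\ref{nlred}) that even the stronger instance of \cite{2013arXiv1308.2617C} does not improve the \vc{} gap beyond $\nicefrac{7}{6}$, which is why $\nicefrac{7}{6}$ is the constant carried throughout.
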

The result of Item~\ref{isneg} of Theorem~\ref{negipec13thm} has been powerfully improved by~\cite{2013arXiv1308.2617C}, where a very clever implementation of~PCP~\cite{DBLP:conf/focs/MoshkovitzR08} leads to the following theorem. 
\begin{theorem}\label{Chalermsook13thm} \cite{2013arXiv1308.2617C}
Under ETH, in graphs of order~$n$ with maximum degree~$\Delta$:
\begin{enumerate} 
\item\label{Chalermsook13thm1} (General graphs) for any $\delta>0$ and any~$r$ larger than some constant, any $r$-ap\-p\-ro\-x\-i\-ma\-ti\-on algorithm for \is{} runs in time at least $O^*(2^{\nicefrac{n^{1-\delta}}{r^{1+\delta}}})$;
\item\label{chalermsook} ($\Delta$-sparse graphs\footnote{Graphs where the maximum degree is bounded by~$\Delta$.}) for any sufficiently small $\varepsilon > 0$, there exists a constant~$\Delta_\varepsilon$, such that for any $\Delta \geqslant \Delta_\varepsilon$, \is{} on $\Delta$-sparse graphs is not $\Delta^{1-\varepsilon}$-approximable in time $O^*(2^{\nicefrac{n^{1-\varepsilon}}{\Delta^{1+\varepsilon}}})$. 
\end{enumerate}
\end{theorem}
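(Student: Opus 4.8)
The plan, following \cite{2013arXiv1308.2617C}, is to compose a size-efficient PCP with a gap-amplifying graph product and feed the output into an FGLSS-type reduction, tracking throughout the number of vertices as a function of \emph{both} the number $n$ of variables and the target ratio. The starting point is the Sparsification Lemma \cite{impa}: under ETH, \textsc{3-sat} on $n$ variables with $O(n)$ clauses admits no $2^{o(n)}$-time algorithm. Such an instance is then fed to the two-query, near-linear-size PCP of Moshkovitz and Raz \cite{DBLP:conf/focs/MoshkovitzR08}, yielding a bipartite Label-Cover (equivalently, $2$-CSP) instance $\Psi$ with $n^{1+o(1)}$ constraints over a constant-size alphabet $\Sigma$, with perfect completeness and soundness some small constant $\varepsilon$; by the standard FGLSS reduction (one vertex per (constraint, locally satisfying assignment) pair, an edge whenever two partial assignments disagree on a shared variable) this yields a graph on $n^{1+o(1)}$ vertices whose independence number is the number $N_1$ of constraints in the YES case and at most $\varepsilon N_1$ in the NO case, i.e.\ a graph with a \emph{constant} independence-number gap.

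To raise the gap to a prescribed value $r$ while paying only an $r^{1+o(1)}$ factor in the number of vertices, the plan is to amplify with a \emph{derandomized, disperser-based graph product} in the spirit of \cite{zucker}: taking as vertex set a suitable disperser inside $[N]^k$ instead of all of $[N]^k$ boosts the gap to $c^{\,\Theta(k)}$ (with $c=1/\varepsilon$) while keeping only $N_1\cdot 2^{O(k)}$ vertices, so that $k=\Theta(\log r)$ produces, for every target $r$, a graph $G$ on $N=n^{1+o(1)}\,r^{1+o(1)}$ vertices with $\alpha(G)=N_1$ in the YES case and $\alpha(G)\le N_1/r$ in the NO case. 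Item~\ref{Chalermsook13thm1} then follows by a running-time count: an $r$-approximation for \is{} running in $O^*(2^{N^{1-\delta}/r^{1+\delta}})$ would decide $\Psi$, hence the original \textsc{3-sat} instance, in time $O^*\!\big(2^{(n^{1+o(1)}r^{1+o(1)})^{1-\delta}/r^{1+\delta}}\big)=2^{n^{(1-\delta)(1+o(1))}\,r^{-2\delta+o(1)}}$, which is $2^{o(n)}$ for every fixed $r$ larger than a constant, contradicting ETH.

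For Item~\ref{chalermsook} one must in addition keep the maximum degree of $G$ under control. In the $k$-fold disperser product the gap grows like $c^{\,k}$ while the degree grows like $d^{\,k}$ up to lower-order disperser overhead (where $d$ is the degree of the base FGLSS graph), so the gap ends up of order $\Delta^{(\log c)/(\log d)}$. The delicate point is to start from a base gadget — a near-linear-size, bounded-degree PCP-derived graph — in which the gap is already almost as large as the degree, i.e.\ $(\log c)/(\log d)\ge 1-o(1)$; this is precisely the content of tight bounded-degree \is{} hardness (in the spirit of Trevisan and of Austrin--Khot--Safra), here to be revisited with size-efficiency in mind. Tracking the vertex count then yields $N=n^{1+o(1)}\,\Delta^{1+o(1)}$, and the running-time computation above, now with $r=\Delta^{1-\varepsilon}$, gives the stated bound $O^*(2^{n^{1-\varepsilon}/\Delta^{1+\varepsilon}})$.

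The technical heart, and the step I expect to be the main obstacle, is the simultaneous near-linear control in \emph{both} parameters: near-linearity in $n$ forces the use of the Moshkovitz--Raz near-linear-size PCP, since ordinary parallel repetition would square the instance size and destroy the exponent, while the requirement that a gap of $r$ cost only an $r^{1+o(1)}$ blow-up in vertices is exactly what a derandomized graph product with sufficiently good dispersers buys us; for the sparse case one moreover has to ensure that the base gadget has gap essentially matching its degree, which is the most demanding piece of the whole construction.
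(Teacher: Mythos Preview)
This theorem is not proved in the paper at all: it is quoted verbatim from~\cite{2013arXiv1308.2617C} and used as a black box throughout (e.g., in Theorems~\ref{nlred} and~\ref{main2} and in Section~\ref{consequences}). There is therefore no ``paper's own proof'' to compare your attempt against.

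That said, your sketch is a faithful high-level outline of the argument in~\cite{2013arXiv1308.2617C}: start from ETH plus the Sparsification Lemma, pass through the near-linear-size two-query PCP of Moshkovitz--Raz~\cite{DBLP:conf/focs/MoshkovitzR08}, apply an FGLSS-style reduction to obtain a constant gap on $n^{1+o(1)}$ vertices, and then amplify the gap to~$r$ via a derandomized (disperser-based) graph product costing only an $r^{1+o(1)}$ blow-up in the vertex count. Your identification of the two delicate points --- the need for near-linear size in~$n$ (ruling out naive parallel repetition) and, for the bounded-degree item, the need for a base gadget whose gap is essentially as large as its degree --- is exactly where the work lies in~\cite{2013arXiv1308.2617C}. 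As a proof \emph{plan} it is sound; as a proof it is of course only a roadmap, and the present paper makes no attempt to fill in those details, simply invoking the result.
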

%
Our goal in this paper is to introduce a new technique based upon the development of a novel notion of \textit{approximation preserving sparsification} that extends the scope of the classical sparsifiation of~\cite{impa}. 
Then, using approximation preserving sparsifiers, we derive negative results for \is{} in bounded degree graphs as well as for several fundamental problems as \ds{}, \fvs{}, etc. 

\section{Preliminaries}\label{prelim}


The idea of \textit{instance sparsification} (with respect to some parameter) has been introduced in~\cite{impa} and is very closely related to the~ETH. Informally, starting from an instance~$\phi$ of \textsc{$k$-sat}, with~$n$ variables and~$m$ clauses, the sparsification of~\cite{impa} consists of building~$2^{\epsilon n}$ (for some constant $\epsilon > 0$) ``sparse'' instances for the problem, i.e., formul{\ae} on~$n$ variables and~$\delta n$ clauses, for some $\delta > 0$, such that~$\phi$ is satisfiable if and only if one of the sparse formul{\ae} is satisfiable. Let us note that the sparsification of~\cite{impa} \textit{is {not} approximation preserving}. One of the reasons for this, is that when a clause~$C$ has all its literals contained in a clause~$C'$, a reduction rule removes~$C'$, that is safe for the satisfiability of the formula (hence, for exact computation), but not for approximation.

When handling graph problems (or problems that can be represented by means of a graph; this is, for example, the case of \textsc{min set cover}), a natural parameter upon which one can apply sparsification is the maximum degree~$\Delta$ of the input graph. So, a natural sparsification schema for such problems is to start from a graph~$G$ of order~$n$ with arbitrarily large~$\Delta$ and to produce a large number of graphs~$G_i$'s of order bounded by~$n$ and whose maximum degree~$\Delta'$ is bounded by ``something'' smaller than~$\Delta$ and such that some solution with a proved ratio for one~$G_i$ can be transformed into a solution with at least the same ratio for~$G$. Consider an instance~$G$ (of size~$n$) of an optimization problem~$\Pi$ and denote by~$\Delta$ the degree of~$G$. Let $\Pi$-$B$ denote the problem~$\Pi$ restricted to graphs with degree at most~$B$. Informally, an \textit{approximation preserving sparsification} from~$\Pi$ to~$\Pi$-$B$, maps~$G$ into a set $\{G_1,G_2,\dots,G_t\}$ of subgraphs of~$G$ and maps a solution~$S_i$ of~$G_i$ into a solution~$S$ of~$G$, this latter transformation taking polynomial time; $t \leq 2^{\epsilon n}$, for some $\epsilon > 0$, and~$G_i$'s are such that any of them has degree at most~$B_\epsilon$, for a constant~$B_\epsilon$ independent on~$n$. Furthermore, if some~$S_i$ is an $r$-approximation of $\Pi$-$B(I_i)$, then~$S$ is an $r$-approximation in~$G$.
 
In Section~\ref{sparsification} we first formalize the concept of approximation preserving sparsification and then we propose two such sparsifiers. The first sparsifier, called \textit{superlinear sparsifier}, is devised along the line informally described just above and generalizes the (linear) sparsifier introduced in~\cite{subexpo-fpt-inapprox-ipec13}. The superlinear sparsifier, in fact,  relaxes the requirement that~$B_\epsilon$ has to be constant (this was the case of the sparsifier in~\cite{subexpo-fpt-inapprox-ipec13}) and allows the sparsification tree to stop even for non-constant degrees. For simplicity, we present this sparsifier for the case of \is{} and \vc{}, but similar sparsifiers can be developed for several other problems, in particular for the APETH-equivalent problems of~\cite{subexpo-fpt-inapprox-ipec13}. One of the interesting features of this sparsifier is that it allows the transfer of negative results to problems linked to \is{}, or to \vc{},  by approximability preserving reductions building instances of size~$O(n+m)$, where~$m$ denotes the number of edges of the input graph. 
The second sparsifier devised in Section~\ref{sparsification}, is called \emph{$k$-step sparsifier} and runs in polynomial time. It deals with problems whose solutions satisfy some domination property (as \is{}, \ds{}, \ids{}, and \vc{}) 
and gives quite interesting results when handling maximization problems. 

Using either superlinear or $k$-step sparsifier, together with gap-preserving reductions, we prove in Section~\ref{negres} rather strong negative subexponential inapproximability results for several fundamental problems. 
More precisely: 
\begin{itemize} 
\item via \textit{superlinear sparsifier} we show that \textit{under~ETH, and for any $\varepsilon > 0$, none of \ds{}, \textsc{min set cover} and \textsc{min hitting set}, \fvs{}, \ids{}, and \textsc{min feedback arc set} can be $(\nicefrac{7}{6}-\varepsilon)$-approximable in time~$O^*(2^{n^{1-\varepsilon}})$};
\item via \textit{$k$-step sparsifier} we show that \textit{under~ETH, for any $\varepsilon > 0$ and any $\Delta < \Delta_\varepsilon$, in $\Delta$-sparse graphs and in time $O^*(2^{O(\nicefrac{n^{1-\varepsilon}}{\Delta_{\varepsilon}^{1+\varepsilon}})})$, 
\is{}, \lcol{} and \ps{} are inapproximable within ratios $\nicefrac{\Delta}{2}-(\nicefrac{\Delta_\varepsilon}{2} - \Delta_\varepsilon^{1-\varepsilon})$, $\nicefrac{\Delta}{2}-(\nicefrac{\Delta_\varepsilon}{\ell} - \Delta_\varepsilon^{1-\varepsilon})$ and  $\Delta-(\Delta_\varepsilon - \Delta_\varepsilon^{1-\varepsilon})$, respectively};
\item finally, using Item~\ref{Chalermsook13thm1} of Theorem~\ref{Chalermsook13thm} we show that \textit{under~{ETH}, for any $\delta>0$ and any $r \geqslant n^{\nicefrac{1}{2} - \delta}$, \mmvc{} and \ids{} are inapproximable within ratios~$\nicefrac{(c+r)}{(1+c)}$ and~$\nicefrac{1}{(1-c)}$ respectively, in less than~$O^*(2^{\nicefrac{n^{1-\delta}}{r^{1+\delta}}})$ time, in a graph of order~$nr$, with~$c$ the stability ratio of the \is{}-instance of~\cite{2013arXiv1308.2617C}.}
\end{itemize}
Our technique for proving negative results via approximation preserving sparsification (on graph problems) can be outlined as follows. Let~$\Pi$ be some problem inapproximable in time~$O^*(2^{n^{1-\epsilon}})$, for any $\epsilon > 0$,~$\Pi'$ be some problem such that~$\Pi$ reduces to~$\Pi'$ by some approximation preserving reduction~\textsf{R} that works in polynomial time and builds instances of~$\Pi'$ of size $n+m$, and~let $\mathcal{F}$ be a superlinear approximation preserving sparsifier for~$\Pi$. Then, for an instance~$G$ of~$\Pi$ we do the following:
\begin{itemize} 
\item[$-$] apply~$\mathcal{F}$ to~$G$ in order to build at most~$O^*(2^{n^{1-\epsilon}})$ $n^{\epsilon}$-sparse instances~$G_i$;
\item[$-$] transform any sparse instance~$G_i$ into an instance~$G'_i$ of~$\Pi'$;
\item[$-$] if~$\Pi$ is not approximable in time~$O^*(2^{n^{1-\epsilon}})$ within ratio~$r$ and if~\textsf{R} transforms any ratio~$r'$ for~$\Pi'$ into ratio $r = c(r')$ for some invertible function~$c$, then~$\Pi'$ is no more approximable in time~$O^*(2^{n^{1-\eta(\epsilon)}})$ within ratio~$c^{-1}(r)$.
\end{itemize}
In what follows, we use standard notation from graph theory as~$\Gamma(v)$, the set of neighbours of vertex~$v$,~$G[V']$ the subgraph of~$G$ induced by~$V'$.
Given a set system~$(\mathcal{S},C)$, the frequency of the system is defined as the maximum number of subsets an element of~$C$ belongs to.
%
%
Some of the results are given here without proofs. All missing proofs can be found in the appendix.

\section{Approximation preserving sparsifiers}\label{sparsification}


We first informally describe the basic idea behind sparsification~\cite{impa} and its use for deriving lower bounds in exact computation. Assuming a reference problem~$\Pi'$ cannot be solved in~$O^*(\lambda^n)$, for some $\lambda > 1$, we are interested in showing that another problem~$\Pi$ cannot be solved in~$O^*(f(\lambda)^n)$. 
For instance, if the reference problem is \textsc{sat} and $\lambda=2$, our assumption is the Strong~ETH (SETH). 

For doing this, we use reductions from~$\Pi'$ to~$\Pi$. 
Note that one can easily derive negative results if there exists a linear reduction from~$\Pi'$ to~$\Pi$ (i.e., a reduction with linear instance-size amplification). 
But, unfortunately, linear reductions are quite rare, so that approach is limited.
Yet, reductions where~$\Pi'$ is a graph problem, amplifying the instance to a size $O(n+m)$ where~$n$ is the number of vertices and~$m$ the number of edges (or, dealing with some satisfiability problem,~$n$ is the number of variables, and~$m$ the number of clauses) are much less rare. 

A way to overcome non-linearity is to ``\textit{sparsify}'' instances of~$\Pi'$, producing, from an instance $I$, ~$\gamma(n)$ instances where the number of edges is linear to~$n$ and to prove that, for at least one of them, an optimal solution is also (or can be transformed in time at most~$O^*(\gamma(n))$ into) an optimal solution for~$I$.
 We then apply the reduction to all of these sparsified instances.

In other words, for the non-linear reductions to produce non-trivial results, we need a not too costly preprocessing step (sparsification) which makes the number of edges (resp.,~clauses) linear in the number of vertices (resp.,~variables). 

The sparsifier for \textsc{sat}, presented in~\cite{impa}, 
shows that \emph{for every integer $k \geqslant 3$, and every $\varepsilon > 0$ there exists a constant $C_{\varepsilon,k}$ and $2^{\varepsilon n}$ $C_{\varepsilon,k}$-sparse instances of $k$-SAT whose disjunction is equivalent to the initial instance}. But, as noticed above this idea does not work for approximation.

In Section~\ref{subspars} we extend sparsification to approximation by implementing a sparsifier for a large class of maximisation problems (whose solutions are subsets of the vertex-set of the input graph verifying some property)  that works not only for exact computation but also for approximation.

\subsection{Superlinear sparsifier}\label{subspars}

%

Given an optimization graph problem~$\Pi$ and some parameter of the instance (this can be, for instance, the maximum, or the average degree) let $\Pi$-$B$ be the problem restricted to instances where the parameter is at most~$B$ (we use the same notations as~\cite{subexpo-fpt-inapprox-ipec13}). Then, a superlinear sparsifier can be defined as follows.
\begin{definition}\label{subapspars}
An approximation preserving superlinear sparsification from a gra\-ph prob\-lem~$\Pi$ to its bounded parameter version~$\Pi$-$B$ is a pair~$(f,g)$ of functions such that, given any function~$\phi$, sublinear in~$n$, and any instance~$G$ of~$\Pi$:
\begin{itemize}
\item $f$ maps~$G$ into a set $f(G,\phi)=(G_1,G_2,\dots,G_t)$ of instances of~$\Pi$, where $t\leq 2^{\phi(n)}$ and the orders~$n_i$ of the~$G_i$'s are all bounded by~$n$; moreover, there exists a function~$\psi$ (depending on~$\phi$) such that any~$G_i$ has parameter at most~$\psi(n)$ (for instance, if the parameter is the degree of the graph, the number of edges of~$G_i$'s is linear in~$n$, if~$\psi$ is constant, superlinear otherwise);
\item for any $i\leq t$,~$g$ maps a solution~$S_i$ of an instance $G_i \in f(G,\phi)$ into a solution~$S$ of~$G$;
\item there exists an index $i\leq t$ such that if a solution~$S_i$ is an $r$-approximation for~$G_i$, then $S=g(G,G_i,S_i)$ is an $r$-approximation for~$G$;
\item $f$ is computable in time~$O^*(2^{\phi(n)})$, and~$g$ is polynomial in~$|n|$.
\end{itemize}
\end{definition}
For simplicity, the sparsifier of Definition~\ref{subapspars} has been specified in the case of graph problems and assuming that it transfers the same ratio~$r$ from the leaves of the sparsification tree to its root. One can easily see that it can be generalized to any constant transfer function. 

It is also easy to see that the sparsifier can be easily extended to problems defined on set-systems, as \textsc{min set cover} \textsc{min hitting set}, or \textsc{max set packing}. Here, parameters can be the cardinality of the largest set, or the frequency. 
It can also be extended to fit optimum satisfiability problems, where as parameter~$B$ can be considered the maximum occurrence of a variable in the input formula. The soundness of this sparsifier relies on the following folklore lemma.
\begin{lemma}\label{lem1}
An algorithm with branching vector~$(1,\psi(n))$ where $\psi(n)=o(n)$ and $\lim\limits_{\infty} \psi=\infty$, has running time~$O^*((1+\nicefrac{1}{\psi(n)})^n)=O^*\left(2^{\nicefrac{n}{\psi(n)}}\right)$.
\end{lemma}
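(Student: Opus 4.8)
The statement is a routine recurrence bound, so I would treat it as a "folklore" verification rather than attempt anything clever. Let $T(n)$ denote the running time (number of leaves of the search tree, up to a polynomial factor) of an algorithm whose branching vector is $(1,\psi(n))$, meaning that at each internal node we recurse on two subinstances, one with the measure decreased by $1$ and one with the measure decreased by $\psi(n)$. The plan is first to establish the bound $T(n)=O^*\!\left((1+\nicefrac{1}{\psi(n)})^n\right)$ and then to simplify the right-hand side to $O^*\!\left(2^{\nicefrac{n}{\psi(n)}}\right)$ using standard logarithmic estimates, invoking $\psi(n)=o(n)$ and $\psi(n)\to\infty$ only where needed for the second step.

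For the first step I would proceed as in the classical analysis of branching algorithms: guess that $T(n)\le c^{\,n}$ for a suitable base $c>1$ (times a polynomial), and check that this is consistent with the recurrence $T(n)\le T(n-1)+T(n-\psi(n))+\mathrm{poly}(n)$. Substituting the ansatz, it suffices that $c^{\,n}\ge c^{\,n-1}+c^{\,n-\psi(n)}$, i.e.\ that $1\ge c^{-1}+c^{-\psi(n)}$, which is the characteristic inequality associated with the branching vector $(1,\psi(n))$. One then checks that $c=1+\nicefrac{1}{\psi(n)}$ satisfies $c^{-1}+c^{-\psi(n)}\le 1$: indeed $c^{-1}=\psi(n)/(\psi(n)+1)=1-\nicefrac{1}{\psi(n)+1}$, and $c^{-\psi(n)}=(1+\nicefrac{1}{\psi(n)})^{-\psi(n)}\le e^{-1}\cdot(1+o(1))$ is bounded away from $\nicefrac{1}{\psi(n)+1}$'s complement for $n$ large, so the inequality holds once $\psi(n)$ is large enough (which it is, since $\psi\to\infty$). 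A minor subtlety is that the base $c$ here depends on $n$ rather than being a fixed constant, so the textbook "measure-and-conquer" substitution must be stated for a slowly-varying base; since $\psi$ is monotone-ish and $\to\infty$, $c(n)$ changes slowly enough that $c(n)^{\,n}$ still dominates $c(n-1)^{\,n-1}+c(n-\psi(n))^{\,n-\psi(n)}$ up to the polynomial slack. This bookkeeping is the only place where a little care is required; I expect it to be the main (mild) obstacle, and it is handled by noting $c(n-\psi(n))\ge c(n)$ since $\psi$ is eventually nondecreasing, so the subinstances only have a \emph{smaller} base and the induction goes through.

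For the second step I would simply take logarithms:
\[
n\ln\!\left(1+\frac{1}{\psi(n)}\right)=\frac{n}{\psi(n)}\cdot\psi(n)\ln\!\left(1+\frac{1}{\psi(n)}\right)=\frac{n}{\psi(n)}\bigl(1+o(1)\bigr),
\]
using $x\ln(1+\nicefrac1x)\to 1$ as $x\to\infty$ together with $\psi(n)\to\infty$. Hence $(1+\nicefrac{1}{\psi(n)})^n = 2^{\Theta(n/\psi(n))}$, and since $n/\psi(n)\to\infty$ (because $\psi(n)=o(n)$) the $(1+o(1))$ factor in the exponent is absorbed into the $O^*(\cdot)$ notation only if one is content with a bound of the form $O^*(2^{(1+o(1))n/\psi(n)})$; if the paper wants the cleaner $O^*(2^{n/\psi(n)})$ literally, I would instead keep the exact form $O^*((1+\nicefrac1{\psi(n)})^n)$ as the primary statement and present $2^{n/\psi(n)}$ as the asymptotically equivalent shorthand, which is exactly how the lemma is phrased. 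This finishes the argument; no deep ingredient is involved, only the characteristic-equation calculation and a one-line logarithm estimate.
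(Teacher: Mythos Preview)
Your verification that $c=1+\nicefrac{1}{\psi(n)}$ satisfies the characteristic inequality $c^{-1}+c^{-\psi(n)}\le 1$ is wrong. You correctly write $c^{-1}=1-\nicefrac{1}{\psi(n)+1}$, so what you actually need is $c^{-\psi(n)}\le \nicefrac{1}{\psi(n)+1}$. But $c^{-\psi(n)}=(1+\nicefrac{1}{\psi(n)})^{-\psi(n)}\to e^{-1}$ as $\psi(n)\to\infty$, while $\nicefrac{1}{\psi(n)+1}\to 0$; hence for all large $\psi(n)$ one has $c^{-1}+c^{-\psi(n)}\approx 1-\nicefrac{1}{\psi(n)+1}+e^{-1}>1$, and the induction with this base cannot close. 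In fact the positive root $\lambda$ of $X^{\psi}-X^{\psi-1}-1=0$ satisfies $\lambda-1=\Theta(\ln\psi/\psi)$, not $\Theta(1/\psi)$: writing $\lambda=1+\varepsilon$, the equation $\varepsilon(1+\varepsilon)^{\psi-1}=1$ becomes $ue^{u}\approx\psi$ for $u=(\psi-1)\varepsilon$, whence $u=W(\psi-1)\sim\ln\psi$. So the branching genuinely costs $2^{\Theta(n\ln\psi(n)/\psi(n))}$, and your guessed base is too small by a logarithmic factor to dominate the recurrence.

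The paper argues differently: it rewrites the characteristic equation as $\psi(n)=-\log(1-\nicefrac{1}{X})/\log X$, substitutes $X=1+\varepsilon$, and inverts asymptotically. Its final step, passing from $\log\psi(n)\sim -\log\varepsilon$ to $\psi(n)\sim\nicefrac{1}{\varepsilon}$, is itself loose (from equivalence of logarithms one only gets $\varepsilon=\psi(n)^{-1+o(1)}$), and the missing $\ln\psi$ is precisely the factor your failed inequality exposes. For the intended application with $\psi(n)=n^\eta$ this extra logarithm is harmless, since $n^{1-\eta}\ln n=o(n^{1-\eta'})$ for any $\eta'<\eta$, so the downstream statements survive; but as a standalone bound neither argument yields $O^*(2^{n/\psi(n)})$ literally, and your direct-substitution route cannot be repaired without raising the base to $1+\Theta(\ln\psi(n)/\psi(n))$.
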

\begin{proof}
It is well-known that the complexity of a branching algorithm with bran\-ch\-ing vector~$(1,\psi(n))$ is~$O^*(\lambda^n)$ where~$\lambda$ is the positive solution of the equation:
\begin{displaymath}
X^{\psi(n)} - X^{\psi(n)-1} - 1 = 0
\end{displaymath}
It holds that:
$X^{\psi(n)} = \nicefrac{1}{(1-\nicefrac{1}{X})} \Longleftrightarrow \psi(n)=\nicefrac{-\log\left(1-\nicefrac{1}{X}\right)}{\log{X}}$. 
Set $X=1+\varepsilon$. Then~$\psi(n)$ becomes:
\begin{displaymath}
\psi(n)=\frac{-\log\left(\frac{\varepsilon}{1+\varepsilon}\right)}{\log(1+\varepsilon)}=1 - \frac{\log(\varepsilon)}{\log(1+\varepsilon)}
\end{displaymath}
Since $\lim\limits_{n \rightarrow \infty} X=1$, it holds that $\lim\limits_{n \rightarrow \infty} \varepsilon=0$. So, $\log(1+\varepsilon) \sim \varepsilon$ and thus $\psi(n) \sim -\nicefrac{\log(\varepsilon)}{\varepsilon}$ and:
$$
\log \psi(n) \sim - \log \log(\varepsilon) - \log(\varepsilon) \sim - \log(\varepsilon)
$$
Therefore, $\psi(n) \sim \nicefrac{1}{\varepsilon}$, and:
$\lambda \sim 1+\nicefrac{1}{\psi(n)}$.

Note that,  since $\log(\lambda^n) = n \log (1+\nicefrac{1}{\psi(n)}) \sim \nicefrac{n}{\psi(n)}= \log (2^{\nicefrac{n}{\psi(n)}})$, $\lambda^n \sim 2^{\nicefrac{n}{\psi(n)}}$.
\end{proof}
For simplicity, we have chosen in Lemma~\ref{lem1} a very simple branching vector that fits very well many optimization problems and in particular, as Lemma~\ref{lem2} shows, \is{} and \vc{}. But the lemma works also for more general branching vectors, for instance of the form~$(\psi_1(n),\psi_2(n))$.
\begin{lemma}\label{lem2}
For any $\eta > 0$, there exists an approximation preserving $n^\eta$-spar\-s\-i\-fi\-ca\-ti\-on for \is{} and \vc{} working in time~$O^*(n^{1-\eta})$.
\end{lemma}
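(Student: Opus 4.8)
The plan is to take for $f$ a bounded branching tree that repeatedly ``resolves'' a vertex of large degree, and for $g$ the map that re-inserts the resolved vertices into the returned solution; the whole difficulty is concentrated in exhibiting one leaf on which the optimum of $G$ splits \emph{exactly} into the optimum of that leaf plus the contribution of the resolved vertices. Fix $0<\eta<1$ (if $\eta\geq 1$ every $n$-vertex graph is already $n^\eta$-sparse, so $f$ can be the identity), set $B:=\lceil n^\eta\rceil$, and grow a rooted tree of induced subgraphs of $G$: the root is $G$; at a node $H$, if every vertex of $H$ has degree at most $B$ we stop and $H$ is a leaf, otherwise we pick any $v$ with $\deg_H(v)>B$ and give $H$ the two children $H-v$ and $H-(\{v\}\cup\Gamma_H(v))$. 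For \is{} these branches mean ``$v$ excluded from / included in the independent set'', for \vc{} ``$v$ included in / excluded from the cover''; the tree is the same. Each branching deletes at least $1$ vertex on one side and at least $B+1$ on the other, so---increasing a component of a branching vector can only shrink the tree---Lemma~\ref{lem1} with $\psi(n)=n^\eta$ bounds the number of leaves and the tree-building time by $O^*(2^{n/n^\eta})=O^*(2^{n^{1-\eta}})$; thus $\phi(n)=\Theta(n^{1-\eta})$ is sublinear, as Definition~\ref{subapspars} demands. Every leaf $G_i$ is an induced subgraph of $G$, hence has order $n_i\leq n$ and maximum degree at most $B=O(n^\eta)$, i.e.\ $O(n^{1+\eta})$ edges: this is an $n^\eta$-sparsification.

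I would then define $g$. Descending to $G_i$, accumulate for \is{} the set $I_i$ of vertices chosen at an ``included'' branching; since such a step also deletes $\Gamma_H(v)$, no later branching vertex is a $G$-neighbour of $v$, so $I_i$ is independent in $G$ and disjoint from $V(G_i)$. Set $g(G,G_i,S_i):=S_i\cup I_i$, a feasible solution of $G$ computable in polynomial time. (For \vc{}, accumulate instead the set $D_i$ of vertices committed to the cover---$v$ at an ``included in cover'' step, all of $\Gamma_H(v)$ at an ``excluded from cover'' step---and put $g(G,G_i,S_i):=S_i\cup D_i$, a cover of $G$.)

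It remains to exhibit the good index. Fix an optimum independent set $S^\star$ of $G$ and descend by taking, at each branching on $v$, the ``included'' child iff $v\in S^\star$; let $G_{i_0}$ be the leaf reached. Since $S^\star$ is independent, no vertex of $S^\star$ is ever deleted as a neighbour of an included vertex, and none is deleted as the branching vertex of an ``excluded'' step, so $I_{i_0}=S^\star\setminus V(G_{i_0})$; hence $S^\star\cap V(G_{i_0})$ is an independent set of $G_{i_0}$ of size $\opt(G)-|I_{i_0}|$, giving $\opt(G_{i_0})\geq\opt(G)-|I_{i_0}|$, and the reverse inequality is immediate because $I_{i_0}$ together with any independent set of $G_{i_0}$ is independent in $G$. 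Therefore $\opt(G_{i_0})=\opt(G)-|I_{i_0}|$, so if $S_{i_0}$ is an $r$-approximation for $G_{i_0}$ then
\[
|g(G,G_{i_0},S_{i_0})|=|S_{i_0}|+|I_{i_0}|\ \geq\ \frac{\opt(G_{i_0})}{r}+|I_{i_0}|\ \geq\ \frac{\opt(G_{i_0})+|I_{i_0}|}{r}\ =\ \frac{\opt(G)}{r},
\]
the middle step using $r\geq1$, so $g$ returns an $r$-approximation for $G$. For \vc{}, the same descent guided by an optimum cover $C^\star$ gives $D_{i_0}=C^\star\setminus V(G_{i_0})$ and $\mathrm{vc}(G_{i_0})=\mathrm{vc}(G)-|D_{i_0}|$, and from $|S_{i_0}|\leq r\,\mathrm{vc}(G_{i_0})$ one gets $|S_{i_0}|+|D_{i_0}|\leq r\,\mathrm{vc}(G_{i_0})+|D_{i_0}|=r\,\mathrm{vc}(G)-(r-1)|D_{i_0}|\leq r\,\mathrm{vc}(G)$.

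I do not foresee a genuine obstacle; the single delicate point is this ratio transfer, which relies on the \emph{exact} additive split at the lucky leaf (a one-sided bound would be useless) together with the trivial inequality $a/r+b\geq(a+b)/r$ for $r\geq1$. Everything else is the routine complexity accounting of Lemma~\ref{lem1} specialised to $\psi(n)=n^\eta$, plus the observation that the tree---built only by vertex deletions---has induced subgraphs of $G$ as its leaves.
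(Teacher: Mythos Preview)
Your proof is correct and follows essentially the same approach as the paper: the same branching on a maximum-degree vertex, the same appeal to Lemma~\ref{lem1} with $\psi(n)=n^\eta$. The paper's own proof is much terser---it simply asserts that the branching ``is approximation preserving'' without justification---whereas you actually carry out the ratio-transfer argument by exhibiting the leaf reached when the branching is steered by an optimal solution and verifying the exact additive split $\opt(G_{i_0})=\opt(G)-|I_{i_0}|$ (respectively $\mathrm{vc}(G_{i_0})=\mathrm{vc}(G)-|D_{i_0}|$). One tiny expository point: when you argue that $S_i\cup I_i$ is independent, you only explicitly say that no \emph{later branching vertex} is a $G$-neighbour of $v\in I_i$, but you need this for every vertex of $V(G_i)$; the same deletion argument gives it, so this is not a real gap.
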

\begin{proof}
While the maximum degree~$\Delta$ of the surviving graph exceeds~$n^\eta$, the standard branching has vector better than~$(1,n^\eta)$ and is approximation preserving.

For \is{}, this branching consists in either including a vertex~$v$ of maximum degree to the solution and removing~$\Gamma[v]$ ($\Delta+1$ vertices are so removed), or not including~$v$ in the solution and removing it from the graph~(1 vertex removed).

For \vc{}, either include a vertex~$v$ of maximum degree in the solution and remove it from the graph~(1 vertex removed), or discard~$v$ and mandatorily include~$\Gamma(v)$ to the solution and remove~$\Gamma[v]$ ($\Delta+1$ vertices fixed).

By Lemma~\ref{lem1}, this branching takes time~$O^*(2^{n^{1-\eta}})$.
\end{proof}
One of the main characteristics of the classical notions of reducibility used for proving \textbf{NP}-completeness (i.e., Karp- or Turing-reducibility) is the superlinear amplification of the instance sizes. This fact constitutes a major drawback for using these reductions in order to transfer (in)approximability results between problems. Most of the approximation preserving reductions (see~\cite{ausiellobook} for an extensive presentation and discussion of such reductions) manage to limit this amplification in such a way that, in most cases, it remains (almost) linear. In this sense, a reduction which transforms a graph~$G$ of order~$n$ into an instance of size~$O(m)$, has very few chances to be approximation preserving (the bounded-degree requirement of the $\mathsf{L}$-reductions in~\cite{pymaxsnp} basically guarantees that~$m$ remains linear in~$n$). 

As we show in the following Theorem~\ref{nlred}, allowing the approximation preserving sparsifier to stop before the degree becomes a constant, enables us to exploit approximation preserving reductions amplifying the instance ``more than linearly'', and more precisely in $O(n+m)$. Note that, for short,  the theorem handles approximability preserving reductions from~$\Pi$ to~$\Pi'$ that transform some ratio~$r'$ for~$\Pi'$ into ratio $r = c(r') = r'$ for~$\Pi$, i.e.,~$c$ is the identity function.
\begin{theorem}\label{nlred}
Under~ETH:
\begin{enumerate} 
\item\label{mredis} if there exists an approximation preserving reduction from \is{} to a problem~$\Pi$ building instances of size $O(n+m)$, then, for any $\varepsilon>0$, and any~$r$ larger than some constant satisfying $r \leqslant n^{\nicefrac{1}{2} - \varepsilon}$,~$\Pi$ cannot be $c(r)$-approximable 
in time $O^*(2^{\nicefrac{n^{1-2\varepsilon}}{r^{1+\varepsilon}}})$;
\item\label{thm3} if there exists an approximation preserving reduction from \vc{} to a problem~$\Pi$ building instances of size $O(n+m)$, then, for any $\varepsilon>0$,~$\Pi$ is not $c(\nicefrac{7}{6}-\varepsilon)$-approximable in time~$O^*(2^{n^{1-\varepsilon}})$.
\end{enumerate}
\end{theorem}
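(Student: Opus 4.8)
The plan is to \emph{compose} the superlinear sparsifier of Lemma~\ref{lem2} with the hypothesised reduction, and then to contradict the known subexponential lower bounds: Theorem~\ref{Chalermsook13thm}, item~\ref{Chalermsook13thm1}, for item~\ref{mredis}, and Theorem~\ref{negipec13thm}, item~\ref{vcneg}, for item~\ref{thm3}. Fix $\varepsilon>0$, let $\mathsf R$ be the assumed approximation preserving reduction --- from \is{} (resp.\ \vc{}) to $\Pi$, building instances of size $O(n+m)$ --- and suppose, toward a contradiction, that $\Pi$ admits a $c(r)$-approximation algorithm $\mathcal A$ running within the time bound claimed to be forbidden. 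From an instance $G$ of \is{} (resp.\ \vc{}) on $n$ vertices I would then: (i)~run the $n^\eta$-sparsifier of Lemma~\ref{lem2}, for a value $\eta=\eta(\varepsilon)$ to be fixed, obtaining in time $O^*(2^{n^{1-\eta}})$ a set of $O^*(2^{n^{1-\eta}})$ graphs $G_1,\dots,G_t$, each of order $\le n$ and maximum degree $\le n^\eta$, hence with $O(n^{1+\eta})$ edges; (ii)~apply $\mathsf R$ to each $G_i$, producing $\Pi$-instances $I_i$ of size $O(n^{1+\eta})$; (iii)~run $\mathcal A$ on every $I_i$; (iv)~pull each returned solution back through $\mathsf R$ --- approximation preserving, so (with $c=\mathrm{id}$, as in the statement) a $c(r)$-approximate solution of $I_i$ yields an $r$-approximate solution of $G_i$ --- and then through the map $g$ of the sparsifier, obtaining a feasible solution of $G$. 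By the defining property in Definition~\ref{subapspars} there is an index $i$ for which this solution is an $r$-approximation (resp.\ a $(\nicefrac{7}{6}-\varepsilon)$-approximation) of $G$, so returning the best of the $t$ solutions --- which costs $t$ times a polynomial --- gives such an approximation of $G$.

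The total running time is $O^*(2^{n^{1-\eta}})$ times the cost of $\mathcal A$ on the produced $O(n^{1+\eta})$-size instances, and the crux is to choose $\eta$ so that this stays below the invoked lower bound. For item~\ref{thm3} this is easy: pick $\eta$ small, any $\eta<\nicefrac{\varepsilon}{(1-\varepsilon)}$ (for instance $\eta=\nicefrac{\varepsilon}{2}$), so that $(1+\eta)(1-\varepsilon)<1$; then $\mathcal A$ contributes $O^*(2^{n^{(1+\eta)(1-\varepsilon)}})$ time, the whole procedure runs in $O^*(2^{n^{1-\eta}}+2^{n^{(1+\eta)(1-\varepsilon)}})=O^*(2^{n^{1-\delta}})$ for a suitable $\delta=\delta(\varepsilon)>0$, and --- producing a $(\nicefrac{7}{6}-\varepsilon)$-approximation of \vc{} --- this contradicts Theorem~\ref{negipec13thm}, item~\ref{vcneg}. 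For item~\ref{mredis} one balances the two factors against the floor $2^{\nicefrac{n^{1-\delta}}{r^{1+\delta}}}$ of Theorem~\ref{Chalermsook13thm}, item~\ref{Chalermsook13thm1}. Here the hypothesis $r\le n^{\nicefrac{1}{2}-\varepsilon}$ is exactly what is needed: it keeps that floor above roughly $2^{n^{\nicefrac{1}{2}+\varepsilon}}$, which is just enough room to take $\eta$ close to $\nicefrac{1}{2}$ --- so that the $2^{n^{1-\eta}}$ branches of the sparsifier stay below the floor --- while the enlargement to $O(n^{1+\eta})$ edges remains mild enough, and the extra denominator $\nicefrac{1}{r^{1+\varepsilon}}\le\nicefrac{1}{r^{1+\delta}}$ helps enough, that the contribution of $\mathcal A$ also stays below the floor; one then sets, say, $\delta=\varepsilon$, and if a single $\eta$ does not cover the whole admissible range $C_0\le r\le n^{\nicefrac{1}{2}-\varepsilon}$ one splits it into a bounded number of sub-ranges, each with its own $\eta$.

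The main obstacle is precisely this calibration: verifying that $\eta$ --- possibly chosen as a function of both $\varepsilon$ and the regime of $r$ --- makes the number $2^{n^{1-\eta}}$ of sparsified branches and the per-branch cost of $\mathcal A$ on the $O(n^{1+\eta})$-size instances both $O^*(2^{\nicefrac{n^{1-\delta}}{r^{1+\delta}}})$ for some fixed $\delta=\delta(\varepsilon)>0$, uniformly over all admissible ratios $r$; and choosing the threshold ``larger than some constant'' $C_0$ big enough to swallow the constants hidden in the $O(n+m)$ blow-up and in the $O^*(\cdot)$ notation. Everything else is routine: composing an approximation preserving reduction with the sparsifier is faithful on the ratio (for $c=\mathrm{id}$) and cheap on time overhead, and the lifting through $g$ is polynomial by Definition~\ref{subapspars}.
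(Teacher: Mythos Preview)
Your overall architecture---sparsify with Lemma~\ref{lem2}, reduce via~$\mathsf R$, run the assumed $\Pi$-algorithm, pull back through~$\mathsf R$ and the sparsifier's~$g$, and contradict the known lower bound---is exactly the paper's. For item~\ref{thm3} the argument is essentially identical: the paper simply takes $\eta=\varepsilon$ (your $\eta=\varepsilon/2$ works just as well), obtains $\Pi$-instances of size $O(n^{1+\varepsilon})$, and computes the total time as $O^*(2^{n^{1-\varepsilon}+n^{(1+\varepsilon)(1-\varepsilon)}})=O^*(2^{n^{1-\varepsilon^2}})=O^*(2^{n^{1-\varepsilon'}})$, contradicting Theorem~\ref{negipec13thm}\,(\ref{vcneg}). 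This part of your proposal is correct and matches the paper.

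For item~\ref{mredis} the paper's proof diverges from your plan. The paper again takes $\eta=\varepsilon$ (small), \emph{not} $\eta$ close to~$\nicefrac12$, and the entire argument is the single arithmetic observation $(1-2\varepsilon)(1+\varepsilon)=1-\varepsilon-2\varepsilon^2$: an assumed $O^*(2^{N^{1-2\varepsilon}/r^{1+\varepsilon}})$ algorithm on the $\Pi$-instances of size $N=O(n^{1+\varepsilon})$ runs in time $O^*(2^{n^{(1+\varepsilon)(1-2\varepsilon)}/r^{1+\varepsilon}})$, which is below the Theorem~\ref{Chalermsook13thm}\,(\ref{Chalermsook13thm1}) floor with $\delta=\varepsilon$. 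No range-splitting is performed and the sparsifier's own $2^{n^{1-\varepsilon}}$ cost is not separately balanced against the floor. Your instinct to push $\eta$ toward~$\nicefrac12$ in order to tame that cost is \emph{not} what the paper does, and in fact it backfires: with $\eta\approx\nicefrac12$ the $\Pi$-instances have size $\Theta(n^{3/2})$, so for bounded~$r$ the per-instance cost of~$\mathcal A$ becomes roughly $2^{n^{3/2}}$, far above the floor. The paper's choice $\eta=\varepsilon$ keeps the instance blow-up to $n^{1+\varepsilon}$ and lets the exponent calculation go through directly; the condition $r\leqslant n^{\nicefrac12-\varepsilon}$ and the ``larger than some constant'' threshold are inherited from Theorem~\ref{Chalermsook13thm} rather than used to tune~$\eta$.
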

\begin{proof}
We first handle the case of reductions from \is{}. For any~$\varepsilon$, take $\eta=\varepsilon$ and apply Lemma~\ref{lem2} to obtain $n^\varepsilon$-sparse instances in time $O^*(2^{n^{1-\varepsilon}})$. Reduce all those instances to~$\Pi$; instances of size $O(n+n n^\varepsilon)=O(n^{1+\varepsilon})$ are so built.
By~\cite{2013arXiv1308.2617C}, \is{} is not $r$-approximable in~$O^*(2^{\nicefrac{n^{({1-\varepsilon})/({1+\varepsilon})}}{r^{1+\varepsilon}}})$. Thus,~$\Pi$ is not $r$-ap\-proximable in $O^*(2^{\nicefrac{n^{1-2\varepsilon}}{r^{1+\varepsilon}}})$, since $(1-2\varepsilon)(1+\varepsilon)=1-\varepsilon-2 \varepsilon^2=1-\varepsilon-o(\varepsilon)$.

\medskip

We now handle reductions from \vc{}. Beforehand let us do the following important remark. The instance of \is{} built in~\cite{2013arXiv1308.2617C} to ensure the inapproximability gap for \is{}, cannot be used to produce some gap for \vc{} that is greater than~$\nicefrac{7}{6}$, the gap of Item~\ref{vcneg} of Theorem~\ref{negipec13thm}~\cite{subexpo-fpt-inapprox-ipec13}. Indeed, using this instance, the negative result that can be derived for \vc{} is just the impossibility of a subexponential time approximation schema. So, in what follows the gap-preserving reductions from \vc{} we will use the gap~$\nicefrac{7}{6}$  of Theorem~\ref{negipec13thm}.
%

Suppose that~$\Pi$ is $(\nicefrac{7}{6}-\varepsilon)$-approximable in time~$O^*(2^{n^{1-\varepsilon}})$ for some $\varepsilon > 0$. Apply Lemma~\ref{lem2} with $\eta=\varepsilon$ to obtain $n^\varepsilon$-sparse instances in time~$O^*(2^{n^{1-\varepsilon}})$.
Reduce all those instances to~$\Pi$;~$2^{n^{1-\varepsilon}}$ instances of size $O(n+n n^\varepsilon)=O(n^{1+\varepsilon})$ are so built.
By assumption, in time $2^{n^{1-\varepsilon}}2^{(n^{1+\varepsilon})^{1-\varepsilon}}=2^{n^{1-\varepsilon}+n^{1-\varepsilon^2}}=O(2^{n^{1-\varepsilon'}})$ (by setting, say, $\varepsilon'=2\varepsilon^2$), one can $(\nicefrac{7}{6}-\varepsilon)$-approximate all those subinstances and therefore one can $(\nicefrac{7}{6}-\varepsilon)$-approximate \vc{}, a contradiction with Item~\ref{vcneg} of Theorem~\ref{negipec13thm}.
\end{proof}

\subsection{A $k$-step sparsifier for maximization subset graph-problems}\label{polysparssec}

The superlinear sparsifier developped in Section~\ref{subspars} obviously works in superpolynomial time. In what follows, we develop, simple approximability preserving sparsifier, working in polynomial time. Here also, sparsification is done with respect to the maximum degree~$\Delta$ of the input graph~$G$. 

We deal with maximization graph problems where feasible solutions are subsets of the vertex-set verifying some specific property (in this paper we consider hereditary property); we call informally these problems ``subset problems''. Furthermore, we suppose that non-trivial feasible solutions dominate the rest of vertices of the graph.  The degree decreasing (sparsification) is done thanks to this domination characteristic of the solution. For reasons of simplicity, we describe the sparsifier for the case of \is{}, but it can be identically applied for any subset problem whose non-trivial solutions dominate the rest of the vertices of the input graph.

Consider a graph~$G$ with degree~$\Delta$ and a constant $k  < \Delta$. Then the sparsifier, builds an instance of \is{}-$\Delta - k$ running the following procedure:
\begin{quote}
\textit{for $1 \leqslant i \leqslant k$, repeatedly excavate maximal (for inclusion) independent sets~$X_i$, until the degree of the surviving graph becomes equal to~$\Delta - k$}. 
\end{quote}
Denote by~$G'(V',E')$ the instance of \is{}-$\Delta-k$, so-built. Note that, since maximal independent sets dominate the vertices of the graph where they are excavated, their removal reduces the maximum degree. Hence, at the end of the sparsification,~$G'$ has degree~$\Delta-k$. Furthermore, the sparsifier iterates~$k$ times, that is polynomial in~$n$. 
%

Remark that non-trivial solutions of several maximization subset graph-problems verify vertex-domination property. This is the case, for instance of \lcol{}, or of \ps{}. Indeed if there exists a vertex non dominated by a vertex-set~$V'$ inducing an $\ell$-colorable subgraph, it suffices to add it in one of the color-classes. The graph $G[V' \cup \{x\}]$ always remains $\ell$-colorable. The same holds for \ps{}.
\begin{theorem}\label{main}
Let~$\mathcal{P}(\Pi,r',\Delta - k)$ be the following property: "\emph{if problem~$\Pi$ is approximable within ratio~$r'$ in time~$f(n)$ on $(\Delta-k)$-sparse graphs then, on $\Delta$-sparse graphs, it is $(r'+1)$-approximable in time~$O(f(n)+n^2)$}". Then:
\begin{enumerate}
\item\label{main1} $\mathcal{P}(\is,r',\Delta - 2)$;
\item\label{mainlcol} $\mathcal{P}(\lcol,r',\Delta - \ell)$;
\item\label{main3} $\mathcal{P}(\ps,r',\Delta - 1)$.
\end{enumerate}
\end{theorem}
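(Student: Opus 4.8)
The plan is to analyze the $k$-step sparsifier described just above the theorem, instantiated with $k=2$ for \is{}, with $k=\ell$ for \lcol{}, and with $k=1$ for \ps{}, and to bound the loss incurred when transporting an approximate solution of the sparse instance $G'$ back to $G$. First I would set up the bookkeeping: running the procedure on $G$ produces a sequence of maximal independent sets $X_1,\dots,X_k$ whose union $R=\bigcup_i X_i$ is removed, leaving $G'=G[V\setminus R]$ with maximum degree $\Delta-k$ (this is exactly what the paragraph preceding the theorem establishes, using that a maximal independent set dominates the graph it is excavated from, hence its removal drops the degree by at least one). The key structural observation is that each $X_i$ is itself an independent set of $G$, so each is a feasible \is{}-solution; more importantly, for \lcol{} the union of any $\ell$ of the $X_i$'s induces an $\ell$-colorable subgraph (the $X_i$'s are the color classes), and for \ps{} a single independent set induces a (trivially) planar subgraph. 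So among $X_1,\dots,X_k$ — or among the relevant unions of them — there is always a feasible solution of $G$ of size at least $|R|/k$ times the appropriate constant; I would make this precise per problem.

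Next I would argue the transfer inequality. Let $S'$ be an $r'$-approximate solution for $\Pi$ on $G'$, and let $\opt(G)$, $\opt(G')$ denote the optimum values on $G$ and $G'$ respectively. Since $G'$ is an induced subgraph of $G$, an optimal solution of $G$ restricted to $V\setminus R$ is feasible for $G'$ when the property is hereditary (which is assumed: \is{}, $\ell$-colorability, planarity of an induced subgraph are all hereditary), and its size is at least $\opt(G)-|R|$ — we lose at most the $|R|$ vertices that lie in $R$. Hence $\opt(G')\ge \opt(G)-|R|$. On the other hand, the best among the excavated pieces gives a feasible solution $T$ of $G$ with $|T|\ge |R|/k$ for \is{}, with $|T|\ge \ell|R|/k$ for \lcol{} (take the $\ell$ largest classes, or just the union of all $k=\ell$ of them, which has size exactly $|R|$ and is $\ell$-colorable), and $|T|\ge |R|$ for \ps{} with $k=1$ (the single excavated independent set is already a planar induced subgraph of size $|R|$). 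Now take as the returned solution $S$ the better of $S'$ and $T$, viewed inside $G$; this takes the claimed $O(n^2)$ extra time (the excavation is $k$ rounds of greedy maximal independent set, each $O(n+m)=O(n^2)$, plus comparing sizes). Then $|S|\ge\max\{|S'|,|T|\}\ge \tfrac12(|S'|+|T|)$, and with $|S'|\ge \opt(G')/r'\ge(\opt(G)-|R|)/r'$ and $|T|$ as above, a short computation in each case — balancing the two bounds, which is exactly why $k=2$ is chosen for \is{}, $k=\ell$ for \lcol{}, $k=1$ for \ps{} — yields $|S|\ge \opt(G)/(r'+1)$, i.e. an $(r'+1)$-approximation. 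For \is{}: $|T|\ge|R|/2$ and $|S'|\ge(\opt-|R|)/r'$, so $\max\ge \opt/(r'+1)$ after optimizing over the unknown $|R|$ (worst case $|R|\approx 2\opt/(r'+1)$). The \lcol{} and \ps{} cases are the same calculation with the constants $\ell/k=1$ and $1/k=1$ plugged in so the per-piece bound is again $\ge|R|$ against the hereditary loss of $|R|$.

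The main obstacle — really the only delicate point — is making sure the "lose at most $|R|$ from the optimum" step and the "excavated pieces are large" step are genuinely both tight with the chosen $k$, so that the $1/(r'+1)$ balance actually goes through rather than giving a weaker ratio like $r'/(1-\text{something})$; this is purely the arithmetic of $\max\{a,b\}\ge\tfrac{(r'+1)}{r'}\cdot\frac{ar'+b}{\text{stuff}}$-type optimization and I would just carry it out explicitly for each of the three items. A secondary point to check is that for \lcol{} the excavated sets must be combined correctly: running the procedure for $k=\ell$ rounds and taking $R=X_1\cup\dots\cup X_\ell$ gives a feasible $\ell$-colorable induced subgraph of size exactly $|R|$, which is what the balance needs; one should note the procedure stops once the degree has dropped by $\ell$, which happens within $\ell$ rounds since each round drops it by at least one, so we may pad with empty sets if it stops early without affecting anything. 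No \textbf{ETH} or PCP machinery is needed here — this theorem is the purely combinatorial engine; the inapproximability consequences come later by feeding in Theorem~\ref{Chalermsook13thm} and iterating $\mathcal P$ a non-constant number of times.
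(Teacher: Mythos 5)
Your strategy works for Items~\ref{mainlcol} and~\ref{main3}, and there it essentially coincides with the paper's proof: for \lcol{} and \ps{} the entire excavated set $R$ is itself a feasible solution of $G$, so pairing $|T|=|R|$ with $|S'|\geqslant(\opt(G)-|R|)/r'$ and balancing at $|R|=\opt(G)/(r'+1)$ does give the ratio $r'+1$. But for Item~\ref{main1} (\is{}) your stated bounds are not strong enough, and the balance you announce does not go through. With $|T|\geqslant|R|/2$ and $|S'|\geqslant(\opt(G)-|R|)/r'$, the quantity $\max\bigl\{|R|/2,\,(\opt(G)-|R|)/r'\bigr\}$ is minimized at the balance point $|R|=2\opt(G)/(r'+2)$ (not $2\opt(G)/(r'+1)$ as you write), where it equals $\opt(G)/(r'+2)$. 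So your argument, carried out explicitly, proves only $\mathcal{P}(\is,r',\Delta-2)$ with ratio $r'+2$; the factor $\nicefrac{1}{2}$ lost by taking the better of the two excavated sets is exactly what kills the claimed bound. Replacing $|R|$ by $|S^*\cap R|$ in the hereditary-loss step does not help either: you still only get $|T|\geqslant|S^*\cap R|/2$, and the same computation again yields $r'+2$.

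The missing idea is the one the paper uses: the union $S_1\cup S_2$ of the two excavated maximal independent sets induces a \emph{bipartite} graph $B$, so a \emph{maximum} independent set $S^*_B$ of $B$ can be computed exactly in polynomial time, and $|S^*_B|\geqslant|S^*\cap B|$ because the restriction of a maximum independent set of $G$ to $B$ is independent in $B$. The correct case analysis is then against the split of the optimum, not against $|R|$: either $|S^*_B|\geqslant\alpha(G)/r$ (and $S^*_B$ already achieves ratio $r$ in $G$), or $|S^*\cap B|<\alpha(G)/r$, so the part of $S^*$ surviving in $G'$ exceeds $(1-\nicefrac{1}{r})\alpha(G)$, and the $r'$-approximation on the $(\Delta-2)$-sparse graph $G'$ returns a set of size greater than $(1-\nicefrac{1}{r})\alpha(G)/r'$; equating $r$ with $\nicefrac{r\cdot r'}{r-1}$ gives $r=r'+1$. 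Your degree-drop argument, the hereditarity remark, and the $O(f(n)+n^2)$ accounting are all fine and match the paper; only this exact-solution-on-the-bipartite-part step needs to replace your ``best excavated piece'' bound in the \is{} case.
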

\begin{proof}
Let~$G(V,E)$ be a graph on~$n$ vertices with maximum degree~$\Delta$. 
Let~$S^*$ be a maximum independent set of~$G$. 
Run the $k$-step sparsifier for two steps and stop it (this obviously takes polynomial time). It computes two maximal independent sets~$S_1$ in~$G$, and $S_2$ in~$G[V \setminus S_1]$; $G' = G[V \setminus (S_1 \cup S_2)]$ has degree degree at most $\Delta - 2$. 
Set $B=G[S_1\cup S_2]$, the bipartite subgraph of~$G$ induced by the union of~$S_1$ and~$S_2$. 

Since~$B$ is bipartite, a maximum independent set~$S^*_B$ in~$B$ can be computed in polynomial time. 
If $|S^*_B| \geqslant \nicefrac{\alpha(G)}{r}$, then $S^*_B$ is an $r$-approximation \is{} in~$G$. 

Assume now that $|S^*_B| < \nicefrac{\alpha(G)}{r}$ and consider the graph $G' = G[V \setminus (S_1 \cup S_2)]$. Let $S^{*'}$ be the part of~$S^*$ contained in~$G'$. Since $|S^*_B| < \nicefrac{\alpha(G)}{r}$, and since $S^*_B$ has size at least equal to the size of the part of~$S^*$ that belongs to~$B$, $|S^{*'}| > (1-\nicefrac{1}{r})\alpha(G)$.

The graph~$G'$ has degree at most $\Delta - 2$, since if a vertex~$v$ has degree~$\Delta$, or $\Delta - 1$ in $G[V \setminus (S_1 \cup S_2)]$, then it has no neighbors in either~$S_1$, or~$S_2$ and this contradicts the maximality of at least one of them.

Run in~$G'$ the $r'$-approximation algorithm (with complexity~$f(n))$ assumed for $(\Delta-2)$-sparse graphs and denote by~$S'$ the solution returned.
Since~$S'$ is an $r'$-approximation, $|S'| \geqslant \nicefrac{|S^*|}{r'}$, so, $|S'| > ((1-\nicefrac{1}{r})\nicefrac{1}{r'})\alpha(G)$. The independent set~$S'$ is obviously a solution also for~$G$ and guarantees ratio~$\nicefrac{r\cdot r'}{r-1}$.

Finally, take the best among independent sets~$S^*_B$ and~$S'$ as solution for~$G$.

Equality of ratios~$r$ and~$\nicefrac{r\cdot r'}{r-1}$ derives $r = r'+1$.
Since ratio~$r'$ is achieved in time~$f(n)$ and the application of the sparsification step takes time~$O(n^2)$, ratio~$r$ is achieved for \is{} in~$G$ in time~$O(f(n)+n^2)$ as claimed.  
%

\medskip

%
For \lcol{}, let~$G(V,E)$ be a graph on~$n$ vertices with maximum degree~$\Delta$. 
Let~$L^*$ be an optimal solution for \lcol{} on~$G$. 
Run the the $k$-step sparsifier for \is{} for~$\ell$ steps. It iteratively excavates~$\ell$ maximal independent sets~$S_1, S_2, \ldots S_{\ell}$. 
Set $V' = S_1\cup S_2 \cup \ldots \cup S_{\ell}$, and $G'=G[V']$, the $\ell$-colorable subgraph of~$G$ induced by~$V'$. Denote by~$L^{*'}$ the part of~$L^*$ belonging to~$L^*$. 

If $|L^{*'}| \geqslant \nicefrac{L^{*}}{r}$ then, since $|V'| \geqslant |L^{*'}| \geqslant \nicefrac{L^{*}}{r}$,~$V'$ is an $r$-approximation \is{} in~$G$. 

Assume now $|L^{*'}| < \nicefrac{L^{*}}{r}$ and consider the graph $G'' = G[V \setminus V']$. Let $L^{*''}$ be the part of~$L^*$ contained in~$G''$. Since $|L^{*'}| < \nicefrac{L^{*}}{r}$, $|L^{*''}| > (1-\nicefrac{1}{r})|L^{*}|$.

The graph~$G''$ has degree at most $\Delta - \ell$ and the rest of the proof remains similar to the corresponding part of that of the first item.

\medskip

For \ps{}, one just excavates only one independent set. An independent set is a planar graph. 
The rest of the proof of the third item is the same as above.
\end{proof}

\section{Subexponential inapproximability}\label{negres}

\subsection{Via superlinear sparsification}\label{subnegres}

Combining the superlinear sparsifier of Definition~\ref{subapspars} in Section~\ref{subspars} together with approximation preserving reductions from \vc{} to several problems, the following theorem can be proved. 
\begin{theorem}\label{negresthm}
Under~ETH, and for any $\varepsilon > 0$, none of \ds{}, \textsc{min set cover} and \textsc{min hitting set}, \fvs{}, \ids{}, and \textsc{min feedback arc set} is $(\nicefrac{7}{6}-\varepsilon)$-approximable in time~$O^*(2^{n^{1-\varepsilon}})$.
\end{theorem}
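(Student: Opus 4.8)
The plan is to obtain each of the six inapproximability statements as a direct instance of Theorem~\ref{nlred}(\ref{thm3}): it suffices to exhibit, for every problem $\Pi$ in the list, a polynomial-time approximation preserving reduction from \vc{} to $\Pi$ that (i) produces instances of size $O(n+m)$ and (ii) has transfer function $c$ equal to the identity (an $r$-approximation for $\Pi$ yields an $r$-approximation for \vc{}). All of these reductions are classical; the work is in recalling them and checking the two requirements, in particular that the optimum is preserved tightly enough for the $\nicefrac{7}{6}$ gap of Theorem~\ref{negipec13thm}(\ref{vcneg}) to survive unchanged.

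First I would dispose of the set-system problems, which are immediate: \vc{} on $G=(V,E)$ is literally \textsc{min set cover} with ground set $E$ and one set $\Gamma(v)$ per vertex (description size $O(n+m)$), and literally \textsc{min hitting set} with ground set $V$ and one $2$-element set per edge; in both cases feasible solutions, optima and ratios coincide with those of \vc{}, so $c=\mathrm{id}$. Next, for \ds{} I would use the folklore gadget: from a graph $G$ of minimum degree at least $1$, build $H$ by keeping $G$ and adding, for every edge $e=uv$, a new vertex $x_e$ adjacent to $u$ and $v$; then $|V(H)|=n+m$, a vertex cover of $G$ dominates $H$, and from a dominating set $D$ of $H$ replacing each $x_e\in D$ by one of its endpoints produces a vertex cover of $G$ of size at most $|D|$, whence $\mathrm{dom}(H)=\tau(G)$ with identity transfer. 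The same triangle-per-edge construction serves for \fvs{}: every edge of $G$ becomes a triangle in $H$, so a feedback vertex set must meet each triangle; the ``replace $x_e$ by an endpoint'' swap shows it may be assumed to lie inside $V$, and such a set is a feedback vertex set of $H$ iff it is a vertex cover of $G$, giving $\mathrm{fvs}(H)=\tau(G)$. For \textsc{min feedback arc set} I would use the standard in/out vertex splitting: replace each $v$ by an arc $v_{\mathrm{in}}\to v_{\mathrm{out}}$ and each edge $uv$ by the two arcs $u_{\mathrm{out}}\to v_{\mathrm{in}}$ and $v_{\mathrm{out}}\to u_{\mathrm{in}}$; directed cycles correspond to closed walks of $G$, a minimum feedback arc set may be taken to consist only of ``vertex arcs'', and deleting the vertex arcs of a set $C\subseteq V$ kills all directed cycles iff $C$ is a vertex cover, so again optimum and ratio transfer exactly (the digraph has $2n$ vertices and $n+2m$ arcs, i.e.\ size $O(n+m)$).

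Finally, for \ids{} I would invoke the known approximation preserving reduction from \vc{} to minimum independent dominating set (minimum maximal independent set); this is the case that requires the most care, because a naive edge gadget tends to collapse the optimum (a minimum maximal independent set is ``happy'' to keep its trace on $V$ small), so one must use a gadget that charges every vertex left outside the cover, arranged so that the minimum maximal independent set of the constructed graph is tied to $\tau(G)$ tightly enough — and with the size still $O(n+m)$ — for the $\nicefrac{7}{6}$ gap to carry over with $c=\mathrm{id}$.

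Once all six reductions are in place, Theorem~\ref{nlred}(\ref{thm3}) applied with $c=\mathrm{id}$ and the gap $\nicefrac{7}{6}-\varepsilon$ yields, under~ETH and for every $\varepsilon>0$, the non-existence of a $(\nicefrac{7}{6}-\varepsilon)$-approximation algorithm running in time $O^*(2^{n^{1-\varepsilon}})$ for each of \ds{}, \textsc{min set cover}, \textsc{min hitting set}, \fvs{}, \ids{}, and \textsc{min feedback arc set}. The main obstacle I anticipate is precisely the exact gap-preservation (and $O(n+m)$ blow-up) of the \ids{} reduction; for the other five the verification is routine, the only mild point being to note that every reduction is valid on the bounded-degree instances fed to it, since the sparsification itself is already taken care of inside Theorem~\ref{nlred}.
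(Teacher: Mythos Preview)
Your approach is exactly the paper's: apply Theorem~\ref{nlred}(\ref{thm3}) via approximation-preserving reductions from \vc{} with $O(n+m)$ blow-up and identity transfer. The variants you give for \ds{} and \fvs{} (one dummy vertex per edge instead of the paper's two) and for \textsc{min set cover}/\textsc{min hitting set} (reducing directly from \vc{} rather than via \ds{}) all work and are equally simple; your \textsc{min feedback arc set} reduction is precisely Karp's, the same one the paper uses.

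The one place where your proposal is not yet a proof is \ids{}. You correctly flag it as the delicate case, but then only gesture at ``the known approximation preserving reduction'' without specifying it or verifying the $O(n+m)$ size and identity transfer; as you yourself note, naive edge gadgets do \emph{not} give $\iota(G')=\tau(G)$. The paper supplies a concrete construction here: take the \ds{} gadget (for each edge $e=(u,v)$, two dummy vertices $y_e,z_e$ each adjacent to both $u$ and $v$) and additionally \emph{delete all the original edges of $G$}, so that the copy of $V$ is an independent set in $G'$. The paper then argues that an optimal independent dominating set of $G'$ may be taken to contain only copy (non-dummy) vertices, and that a subset of $V$ is an independent dominating set of $G'$ iff it is a vertex cover of $G$, giving the required identity transfer on an instance of size $n+2m$. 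Filling in this specific gadget and checking its two properties is what is missing from your write-up; everything else matches the paper.
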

\begin{proof}
For \ds{}, let~$G(V,E)$ be an instance of \vc{} and assume~$G$ is connected. Build a graph~$G'(V',E')$ as follows. Start from a copy of~$G$ and for each edge $e=(u,v) \in E$, add two dummy vertices~$y_e$ and~$z_e$ in~$V'$ and link those vertices to~$u$ and~$v$. The graph~$G'$ so built has order $n+2m$.

A minimum dominating set in~$G'$ does not contain any dummy vertex. Indeed, if a solution~$S$ contains~$y_{(u,v)}$ or~$z_{(u,v)}$, then $S \setminus \{y_{(u,v)},z_{(u,v)}\} \cup \{u\}$ is still a dominating set of at most equal cardinality. 
Thus, a minimum dominating set in~$G'$ naturally maps to a subset of~$V$ which covers all the edges, hence a vertex cover of the same size. Furthermore, given an $r$-approximation of \ds{} in~$G'$, one can start by removing the potential dummy vertices as explained above, and then obtain an $r$-approximation for \vc{}. Item~\ref{thm3} of Theorem~\ref{nlred} suffices for completing the proof.

The result for \textsc{min set cover} immediately follows from a well-known approximation preserving reduction from \ds{} (function~$c$ being the identity function). Given an instance~$G(V,E)$ of \ds{}, one can construct an instance~$(\mathcal{S},C)$ of \textsc{min set cover}, where~$\mathcal{S}$ is a set-system over the ground set~$C$, by taking $\mathcal{S} = V$, $C = V$ and, for each vertex $v_i \in V$, the corresponding set $S_i \in \mathcal{S}$ contains as elements $c_j \in C$ such that vertex $v_j$ is either~$v_i$ or $v_j \in \Gamma(v_i)$.

For \textsc{min hitting set}, just observe is the problem is similar to \textsc{min set cover} where roles of~$\mathcal{S}$ and~$C$ are interchanged.

Notice that the previous reduction still works for \fvs{}. In~$G'$, every subset of vertices containing non-dummy vertex is a dominating set, iff it is a feedback vertex set\footnote{These reductions rely on the fact that, in graphs without isolated vertices, a vertex cover is both a dominating set and a feedback vertex set.}. 

For \ids{}, tune the previous reduction by deleting all the edges in the copy of the graph~$G$. In other words, build~$G'$ from an independent set~$V$ of size $n=|V|$ where each vertex corresponds to a vertex in~$V$, and link all the vertices $u \in V$ to an independent set~$I_e$ with~2 dummy vertices for each edge $e=(u,v)$. Again, an optimal solution contains only copy vertices (no dummy vertices). Furthermore, in~$G'$, every subset containing non-dummy vertex is an independent dominating set iff it is a vertex cover in~$G$.

For \textsc{min feedback arc set}, the reduction in~\cite{karpcl} is approximation preserving with~$c$ the identity function.
The graph $G'(V',E')$ for \textsc{min feedback arc set} is built with:
\begin{eqnarray*}
V' &=& V \times \{0,1\} \\
E' &=& \{((u,0),(u,1)): u \in V\} \cup \{((u,1),(v,0)): (u,v) \in E\}
\end{eqnarray*}
In any solution, an arc $((u,1),(v,0))$ can be advantageously replaced by arc $((v,0),(v,1))$.
Indeed, a cycle containing edge~$((u,1),(v,0))$, necessarily contains also edge~$((v,0),(v,1))$ since the vertex~$(v,0)$ has out-degree~1.
Thus, removing~$((v,0),(v,1))$ destroys the same cycles (plus potentially others).
We can therefore assume that a solution is $\{((v,0),(v,1)): v \in S\}$, for some $S \subseteq V$.
Now,~$S$ is a vertex cover, and an $r$-approximation for \textsc{min feedback arc set} transforms into an $r$-approximation for \vc{}.
\end{proof}
%
%
Let us note that using the classical reduction from \vc{} to \msat{}~\cite{marathe-ravi} a similar result can be derived for \msat{}.

\subsection{Via $k$-step sparsification}

Revisit Item~\ref{chalermsook} of Theorem~\ref{Chalermsook13thm}. There,~$\Delta_\varepsilon$ is related to~$\varepsilon$ in the following way: there exists a universal constant~$C$ such that $\Delta_\varepsilon = 2^{\nicefrac{C}{\varepsilon}}$. Our purpose in this section is to strengthen this item deriving inapproximability for \is{}, \lcol{} and \ps{}, in subexponential time~$O^*(2^{O(\nicefrac{n^{1-\varepsilon}}{\Delta_{\varepsilon}^{1+\varepsilon}})})$ with a \emph{smaller} bounded degree. 
\begin{theorem}\label{main2}
Under~ETH, for any $\varepsilon > 0$ and any $\Delta < \Delta_\varepsilon$, in $\Delta$-sparse graphs,
\is{}, \lcol{} and \ps{} are inapproximable within ratios $\nicefrac{\Delta}{2}-(\nicefrac{\Delta_\varepsilon}{2} - \Delta_\varepsilon^{1-\varepsilon})$, $\nicefrac{\Delta}{2}-(\nicefrac{\Delta_\varepsilon}{\ell} - \Delta_\varepsilon^{1-\varepsilon})$ and  $\Delta-(\Delta_\varepsilon - \Delta_\varepsilon^{1-\varepsilon})$, respectively, in time $O^*(2^{O(\nicefrac{n^{1-\varepsilon}}{\Delta_{\varepsilon}^{1+\varepsilon}})})$.
\end{theorem}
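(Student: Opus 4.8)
The plan is to derive all three bounds by \emph{descending} on the degree parameter: starting from the bounded-degree inapproximability of \is{} given by Item~\ref{chalermsook} of Theorem~\ref{Chalermsook13thm} at degree~$\Delta_\varepsilon$, one peels off degree little by little with the $k$-step sparsifier of Theorem~\ref{main}, until reaching degree~$\Delta$. The engine is the contrapositive of the property $\mathcal{P}(\Pi,r',\Delta'-k)$ of Theorem~\ref{main}: \emph{if $\Pi$ is not $(r'+1)$-approximable on $\Delta'$-sparse graphs in time $O^*(T(n))$, then it is not $r'$-approximable on $(\Delta'-k)$-sparse graphs in time $O^*(T(n))$.} This composes: after any constant number of rounds the additive $O(n^2)$ term of Theorem~\ref{main} is still absorbed by $O^*(\cdot)$, the ratio has dropped by the number of rounds, and --- since none of the three sparsifiers increases the order of the instance --- the parameter~$n$, hence the target time $O^*(2^{O(\nicefrac{n^{1-\varepsilon}}{\Delta_\varepsilon^{1+\varepsilon}})})$, is unchanged all along the descent.

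For \is{} one fixes $\varepsilon>0$, assumes for simplicity $\Delta_\varepsilon\equiv\Delta\pmod 2$ (the general case loses at most~$1$ in the ratio, and $(\Delta-1)$-sparse graphs are $\Delta$-sparse), and invokes Item~\ref{chalermsook} of Theorem~\ref{Chalermsook13thm} with degree bound exactly~$\Delta_\varepsilon$: \is{} is not $\Delta_\varepsilon^{1-\varepsilon}$-approximable on $\Delta_\varepsilon$-sparse graphs of order~$n$ in time $O^*(2^{\nicefrac{n^{1-\varepsilon}}{\Delta_\varepsilon^{1+\varepsilon}}})$. Then one applies the contrapositive of Item~\ref{main1} of Theorem~\ref{main} (that is, $k=2$) exactly $\nicefrac{(\Delta_\varepsilon-\Delta)}{2}$ times, each round costing~$2$ on the degree and~$1$ on the ratio. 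The outcome is that \is{} is not $\big(\Delta_\varepsilon^{1-\varepsilon}-\nicefrac{(\Delta_\varepsilon-\Delta)}{2}\big)$-approximable on $\Delta$-sparse graphs in time $O^*(2^{O(\nicefrac{n^{1-\varepsilon}}{\Delta_\varepsilon^{1+\varepsilon}})})$, and $\Delta_\varepsilon^{1-\varepsilon}-\nicefrac{(\Delta_\varepsilon-\Delta)}{2}=\nicefrac{\Delta}{2}-(\nicefrac{\Delta_\varepsilon}{2}-\Delta_\varepsilon^{1-\varepsilon})$, the announced ratio.

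For \lcol{} and \ps{} one would run the same descent with Items~\ref{mainlcol} and~\ref{main3} of Theorem~\ref{main} ($k=\ell$ and $k=1$, respectively), so the only thing left to supply is the base case at degree~$\Delta_\varepsilon$. For that, the key observation is the elementary sandwich: on every graph~$H$ the optimum of \lcol{} lies between $\alpha(H)$ and $\ell\cdot\alpha(H)$, and the optimum of \ps{} lies between $\alpha(H)$ and $4\cdot\alpha(H)$ --- the left inequalities because an independent set is $\ell$-colorable (resp.\ planar), the right ones because an $\ell$-colorable (resp., by the four-color theorem, a planar) induced subgraph on~$p$ vertices contains an independent set of~$H$ of size~$\geq\nicefrac{p}{\ell}$ (resp.\ $\geq\nicefrac{p}{4}$). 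Hence the very \is{}-instance of~\cite{2013arXiv1308.2617C} at degree~$\Delta_\varepsilon$ is simultaneously a hard instance for \lcol{} and for \ps{}, up to dividing the gap by the constant~$\ell$ (resp.\ $4$); since $\Delta_\varepsilon=2^{\nicefrac{C}{\varepsilon}}$, this constant is absorbed by slightly enlarging the universal constant~$C$, so the base ratio may still be written~$\Delta_\varepsilon^{1-\varepsilon}$. Descending from~$\Delta_\varepsilon$ to~$\Delta$ in steps of~$\ell$ (resp.\ of~$1$) then produces inapproximability within $\nicefrac{\Delta}{\ell}-(\nicefrac{\Delta_\varepsilon}{\ell}-\Delta_\varepsilon^{1-\varepsilon})$ for \lcol{} and within $\Delta-(\Delta_\varepsilon-\Delta_\varepsilon^{1-\varepsilon})$ for \ps{}, as claimed.

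The main obstacle is not any individual step but the cumulative accounting: one must check that the $O(\Delta_\varepsilon)$ sparsification rounds, their $O(n^2)$ overheads, the constant-factor shrinkings of the gap in the \lcol{}/\ps{} base cases, and the parity adjustments together stay inside $O^*(2^{O(\nicefrac{n^{1-\varepsilon}}{\Delta_\varepsilon^{1+\varepsilon}})})$ and perturb nothing but the constants hidden in~$\Delta_\varepsilon$. The ``$O(\cdot)$'' sitting inside the exponent of the running time is precisely the slack that makes this possible; being fully rigorous about it amounts to invoking Item~\ref{chalermsook} of Theorem~\ref{Chalermsook13thm} with a slightly smaller value of the parameter, which costs only a multiplicative constant in the exponent.
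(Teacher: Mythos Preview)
Your proposal is correct and follows essentially the paper's approach: for \is{} you invoke Item~\ref{chalermsook} of Theorem~\ref{Chalermsook13thm} at degree~$\Delta_\varepsilon$ and iterate the contrapositive of Item~\ref{main1} of Theorem~\ref{main} exactly $\nicefrac{(\Delta_\varepsilon-\Delta)}{2}$ times, which is precisely what the paper does; for \lcol{} and \ps{} you supply the base case via the sandwich inequalities $\alpha(H)\le\mathrm{opt}(H)\le\ell\alpha(H)$ (resp.\ $4\alpha(H)$) and then descend with Items~\ref{mainlcol} and~\ref{main3}, which is again the paper's route (the paper states the same reductions and then says the two results ``immediately derive'').

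One remark worth making explicit: your descent for \lcol{} yields the leading term $\nicefrac{\Delta}{\ell}$, not $\nicefrac{\Delta}{2}$ as written in the theorem's statement; your arithmetic is the consistent one (each round of Item~\ref{mainlcol} drops the degree by~$\ell$ and the ratio by~$1$), so the discrepancy is in the statement rather than in your argument. Also, your device of ``absorbing'' the factor~$\ell$ (resp.~$4$) into the universal constant~$C$ defining~$\Delta_\varepsilon$ silently replaces~$\Delta_\varepsilon$ by a different constant; this is harmless for the qualitative result but means the~$\Delta_\varepsilon$ appearing in the \lcol{}/\ps{} bounds is not literally the same object as in the \is{} bound. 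The paper is equally loose on this point.
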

\begin{proof}
By Item~\ref{chalermsook} of Theorem~\ref{Chalermsook13thm}, for any $\varepsilon > 0$, \is{} on $\Delta_\varepsilon$-sparse graphs, is inapproximable within ratio~$\Delta_\varepsilon^{1-\varepsilon}$ in time $O^*(2^{\nicefrac{n^{1-\varepsilon}}{\Delta_\varepsilon^{1+\varepsilon}}})$, with $\Delta_\varepsilon = 2^{\nicefrac{C}{\varepsilon}}$ for some constant~$C$.

For any $\Delta$, run the $k$-step sparsifier on a $\Delta_\varepsilon$-sparse graph~$G$ for $\nicefrac{(\Delta_\varepsilon-\Delta)}{2}$ steps, from~$\Delta_\varepsilon$ down to~$\Delta$, in order to get a $\Delta$-sparse instance~$G'$ of \is{}. Combination of the Item~ref{main1} of Theorem~\ref{main} and of Item~\ref{chalermsook} of Theorem~\ref{Chalermsook13thm} directly derives inapproximability of \is{} in~$G$ within ratio $\Delta_\varepsilon^{1-\varepsilon}-\nicefrac{(\Delta_\varepsilon-\Delta)}{2}=\nicefrac{\Delta}{2}-(\nicefrac{\Delta_\varepsilon}{2} - \Delta_\varepsilon^{1-\varepsilon})$ in time $O^*(2^{O(\nicefrac{n^{1-\varepsilon}}{\Delta_{\varepsilon}^{1+\varepsilon}})})$. 

\medskip

Consider now the following simple reduction from \is{} to \lcol{}. Let~$G(V,E)$ be an instance of~\is{} of order~$n$. We keep~$G$ as the instance of \lcol{}. 
Any independent set~$S$ of~$G$ can be considered as an $\ell$-colorable graph  with empty the $\ell - 1$ of its color classes. 
Conversely, given an $\ell$-colorable graph on sets~$S_1, S_2, \ldots, S_{\ell}$, all them are independent sets and the largest among them has size more than~$\nicefrac{1}{\ell}$ times the size of the $\ell$-colorable graph. So, any ratio~$r$ for \lcol{} becomes ratio~$\ell r$ for~\is{}. 

\medskip

In the same spirit, one can devise a reduction from \is{} to \ps{}. An independent set is a planar graph per se. On the other hand since any planar graph is 4-colorable, a solution $G' = G[S]$ of \ps{} can be transformed into an independent set by coloring the vertices of~$S$ with four colors and taking the largest of them. So an approximation ratio~$r$ for \ps{} is transformed into ratio~$4r$ for \is{}.

The proofs for \lcol{} and \ps{} above of the theorem immediately derive from the remarks above.
\end{proof}
Note that the inapproximability bound for \is{} of Theorem~\ref{main2} (Item~\ref{main1}) cannot be derived by Theorem~\ref{Chalermsook13thm} for $\Delta \geqslant 2^{\nicefrac{C}{\varepsilon}}(\nicefrac{1}{2}-2^{-C})$. So, Theorem~\ref{main2} extends the result of~\cite{2013arXiv1308.2617C} to degree~$\nicefrac{\Delta_\varepsilon}{2}$.

Also, from the discussion of for \lcol{} and \ps{} in the proof of Theorem~\ref{main2}, the following corollary holds.
\begin{corollary}\label{lcolsp}
Under~ETH, and for any $\varepsilon > 0$, neither \lcol{} nor \ps{} is $r$-approximable in time~$O^*(2^{n^{\nicefrac{1-\delta}{r^{1+\delta}}}})$, where~$r$ is the approximability-gap of \is{}.
\end{corollary}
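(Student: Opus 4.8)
The plan is to obtain Corollary~\ref{lcolsp} by transporting the conditional lower bound of Item~\ref{Chalermsook13thm1} of Theorem~\ref{Chalermsook13thm} through the two reductions from \is{} already exhibited in the proof of Theorem~\ref{main2}. Recall that Item~\ref{Chalermsook13thm1} asserts that, under ETH, for every $\delta>0$ and every $r$ larger than some absolute constant, every $r$-approximation algorithm for \is{} on $n$-vertex graphs has running time at least $O^*(2^{\nicefrac{n^{1-\delta}}{r^{1+\delta}}})$. So it suffices to show that a too-fast $r$-approximation for \lcol{} or \ps{} would yield a too-fast $\ell r$- (resp. $4r$-) approximation for \is{} on a graph of the \emph{same} order.

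First I would record the two reductions. Given an instance $G$ of \is{} on $n$ vertices, one keeps $G$ itself as an instance of \lcol{} (respectively of \ps{}); in particular the order of the graph is unchanged. Any feasible solution of the target problem is the union of $\ell$ (respectively, since planar graphs are $4$-colorable, of at most $4$) independent sets, so it contains, computable in polynomial time, an independent set of size at least a $\nicefrac{1}{\ell}$ (respectively $\nicefrac{1}{4}$) fraction of its own size. Hence an $r$-approximation for \lcol{} (respectively \ps{}) on $G$ yields an $\ell r$-approximation (respectively a $4r$-approximation) for \is{} on the same $n$-vertex graph, in polynomial time.

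Then I would argue by contradiction. Suppose \lcol{} admitted, for some $\delta>0$, an $r$-approximation running in time $O^*(2^{\nicefrac{n^{1-\delta}}{r^{1+\delta}}})$ for $r$ in the range for which Item~\ref{Chalermsook13thm1} gives hardness for \is{}. Composing with the reduction above produces an $\ell r$-approximation for \is{} on $n$-vertex graphs in the same time. Since $\ell$ is a constant, this contradicts Item~\ref{Chalermsook13thm1}: taking $\delta'=\nicefrac{\delta}{2}$, one has $\nicefrac{n^{1-\delta}}{r^{1+\delta}}<\nicefrac{n^{1-\delta'}}{(\ell r)^{1+\delta'}}$ whenever $\ell^{1+\delta'}<(nr)^{\delta'}$, i.e. for all sufficiently large $n$, so the constructed algorithm would $(\ell r)$-approximate \is{} strictly faster than the lower bound of Item~\ref{Chalermsook13thm1} applied with parameter $\delta'$ and ratio $\ell r$ (still above the required constant when $r$ is). The case of \ps{} is identical with $4$ in place of $\ell$. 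The only step that needs any care is precisely this last absorption of the constant factor of the ratio into a slight perturbation of the exponent; everything else is a direct composition of order- and ratio-preserving reductions with a known hardness result, so I do not expect a genuine obstacle here.
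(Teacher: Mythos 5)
Your proposal is correct and follows essentially the same route as the paper: the corollary is obtained by composing the two order-preserving reductions already given in the proof of Theorem~\ref{main2} (keeping $G$ as the instance, so that an $r$-approximation for \lcol{} or \ps{} yields an $\ell r$- or $4r$-approximation for \is{}) with the general-graph lower bound of Item~\ref{Chalermsook13thm1} of Theorem~\ref{Chalermsook13thm}. Your explicit absorption of the constant factor $\ell$ (resp.\ $4$) into a slight perturbation of $\delta$ is a detail the paper leaves implicit, but it is the same argument.
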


\subsection{Via Theorem~\ref{Chalermsook13thm}}\label{consequences}

Similar results as those of Corollary~\ref{lcolsp} can be obtained for several other problems linked to \is{} by approximability-preserving reductions. 

For instance, for \sp{}, take $\mathcal{S} = V$, $C = E$ and, for any set $S_i \in \mathcal{S}$, $S_i = \{c_j: e_j \text{ incident to } v_i\}$. This very classical reduction transforms any independent set of~$G$ to an equal-cardinality set-packing of~$(\mathcal{S}, C)$, and vice-versa.

For \mus{}, observe that its optimal value is an affine transformation of the optimum for \textsc{min set cover}. Since this latter problem is a generalization of \vc{} (indeed \vc{} can be seen as a \textsc{min set cover} problem where all ground elements have frequency~2), \mus{} is a generalization of \is{}. 


In what follows in this section, we handle inapproximability bounds for two problems that are closely linked between them, \ids{} and \mmvc{}. In fact, they are related in the same way as \is{} and \vc{}.

Let us first consider \mmvc{} and revisit the following reduction from \is{} given in~\cite{mmvc_waoa}. Given an instance~$G(V,E)$ of~\is{}, link any $v_i \in V$ to $n+1$ new vertices. The so-built graph~$H$ for \mmvc{} has size~$n^2+2n$. Then, by considering a \mmvc{}-solution for~$H$ consisting of taking the out-of-$G$ neighbors of some some independent set~$S$ of~$G$ together with $V\setminus S$ as solution for \mmvc{}, one can guarantee the following:
\begin{equation}\label{mmvc}
\begin{array}{rcl}
\sol(H) & \leqslant & n\cdot|S| + n \\
\opt(H) & \geqslant &  n\cdot\alpha(G) + n
\end{array}
\end{equation}
where~$\sol(H)$ and~$\opt(H)$ denote the sizes of an approximate and of an optimal solutions for \mmvc{}, respectively. Then, using expressions in~(\ref{mmvc}) and considering~$G$ the \is{}-instance of~\cite{2013arXiv1308.2617C}, one easily derives the following.
\begin{proposition}\label{mmvcbacicpro}
Under~{ETH}, for any $\delta>0$ and any $r \geqslant n^{\nicefrac{1}{4} - \delta}$, \mmvc{} is inapproximable within ratio~$r$ in less than~$O^*(2^{n^{\nicefrac{\nicefrac{1}{2} - \delta}{r^{1+\delta}}}})$ time.
\end{proposition}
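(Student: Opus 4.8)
The plan is to feed the reduction from \is{} to \mmvc{} of~\cite{mmvc_waoa} recalled just above, together with the inequalities~(\ref{mmvc}), into Item~\ref{Chalermsook13thm1} of Theorem~\ref{Chalermsook13thm}, while tracking how the instance size and the approximation ratio get distorted. Let~$G$ be the \is{}-instance of~\cite{2013arXiv1308.2617C} on~$\nu$ vertices, with $\alpha(G)=c\nu$ where~$c$ is its stability ratio, and let~$H$ be the graph obtained by attaching $\nu+1$ pendant vertices to each $v_i\in V(G)$; then~$H$ has $n:=\nu^2+2\nu$ vertices, so $\nu=\Theta(\sqrt{n})$ (in the statement,~$n$ is the order of this \mmvc{}-instance).

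\emph{Ratio transfer.} Suppose, for contradiction, that \mmvc{} is $r$-approximable in time~$t(n)$ on $n$-vertex graphs, for~$r$ in the range considered; we may assume $r=o(\nu)$, since for~$r$ of order~$\sqrt{n}$ or more the bound is vacuous or already follows from the easy fact that every minimal vertex cover of~$H$ has at least~$\nu$ vertices (hence achieves ratio $O(\sqrt{n})$). Running the assumed algorithm on~$H$ yields a minimal vertex cover~$T$ with $|T|\ge\opt(H)/r$. As in the derivation of~(\ref{mmvc}), we clean~$T$ up (a pendant in~$T$ whose unique neighbour also lies in~$T$ is redundant, and so on) so that~$T$ takes the form ``the pendants of some independent set~$S$ of~$G$ together with $V(G)\setminus S$''; this exhibits an independent set~$S$ of~$G$ satisfying $\nu|S|+\nu\ge|T|\ge\opt(H)/r\ge(\nu\alpha(G)+\nu)/r$, hence
\begin{displaymath}
|S|\ \ge\ \frac{\alpha(G)+1-r}{r}.
\end{displaymath}
Since $r=o(\alpha(G))$, this gives $|S|\ge(1-o(1))\alpha(G)/r$, so for~$\nu$ large enough~$S$ is (say) a $2r$-approximation for \is{} on~$G$.

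\emph{Size/time transfer and conclusion.} The whole computation (run the \mmvc{}-algorithm on~$H$, clean up~$T$, read off~$S$) costs $t(n)+\mathrm{poly}(n)=t(n)+\mathrm{poly}(\nu)$. If $t(n)=O^*(2^{n^{1/2-\delta}/r^{1+\delta}})$, then, using $n^{1/2-\delta}=(\nu^2+2\nu)^{1/2-\delta}=\Theta(\nu^{1-2\delta})$, this is $O^*(2^{\Theta(\nu^{1-2\delta}/r^{1+\delta})})$. Because $r\ge n^{1/4-\delta}=\Theta(\nu^{1/2-2\delta})$ tends to infinity, the ratio $\rho:=2r$ exceeds the constant of Theorem~\ref{Chalermsook13thm}; picking $\delta'>0$ small enough that $\nu^{1-2\delta}/r^{1+\delta}\le\nu^{1-\delta'}/\rho^{1+\delta'}$ (which also absorbs the constants~$c$ and~$2$), one would obtain a $\rho$-approximation of \is{} on~$\nu$ vertices in time $O^*(2^{\nu^{1-\delta'}/\rho^{1+\delta'}})$, contradicting Item~\ref{Chalermsook13thm1} of Theorem~\ref{Chalermsook13thm}. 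Hence \mmvc{} admits no such $r$-approximation.

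\emph{Main obstacle.} The delicate part is the exponent bookkeeping: squaring the size ($\nu\leadsto n\approx\nu^2$) turns $\nu^{1-\delta}$ into $n^{(1-\delta)/2}$, and one must line up the various~$\delta$'s — and absorb the constants~$c$ (stability ratio) and~$2$ (from the up-to-a-constant ratio transfer) — so that the resulting running time lands strictly inside the regime forbidden by Theorem~\ref{Chalermsook13thm}. One also has to check that the hypothesis $r\ge n^{1/4-\delta}$ puts~$\rho=2r$ comfortably past the constant threshold of that theorem, while the complementary requirement $r=o(\sqrt{n})$ keeps the ratio transfer lossless — so that the claim is non-trivial precisely in this window.
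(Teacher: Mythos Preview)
Your proposal is correct and follows exactly the approach the paper intends: the paper itself gives no detailed proof for this proposition, merely saying that ``using expressions in~(\ref{mmvc}) and considering~$G$ the \is{}-instance of~\cite{2013arXiv1308.2617C}, one easily derives'' the result --- and your argument is precisely that derivation, with the size blow-up $n\approx\nu^2$ and the ratio transfer spelled out. The only cosmetic remark is that no explicit ``clean-up'' of~$T$ is needed: every minimal vertex cover of~$H$ is automatically of the form $A\cup(\text{pendants of }V\setminus A)$ with~$A$ a vertex cover of~$G$, so the independent set $S=V\setminus A$ comes for free.
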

Observe that, in the reduction above, $\Delta(H) \geqslant n \approx \sqrt{n(H)}$. So, the following corollary derives from Propostion~\ref{mmvcbacicpro}.
\begin{corollary}
Furthermore, under~{ETH}, for any $\delta>0$ and any $r \geqslant \Delta^{\nicefrac{1}{2} - \delta}$, \mmvc{} is inapproximable within ratio~$r$ in less than~$O^*(2^{\Delta^{\nicefrac{1-\delta}{r^{1+\delta}}}})$ time.
\end{corollary}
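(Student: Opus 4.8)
The plan is to obtain the corollary from Proposition~\ref{mmvcbacicpro} by a pure change of parameter: Proposition~\ref{mmvcbacicpro} bounds the running time in terms of the \emph{order} of the \mmvc{}-instance, whereas here we want a bound in terms of its \emph{maximum degree}, so it suffices to relate these two quantities on the instances that the underlying reduction actually produces.

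First I would revisit the reduction from \is{} to \mmvc{} of~\cite{mmvc_waoa} that proves Proposition~\ref{mmvcbacicpro}. Starting from the \is{}-instance~$G$ of~\cite{2013arXiv1308.2617C}, say on~$\nu$ vertices, one attaches to each $v_i \in V(G)$ a private set of $\nu+1$ pendant vertices, producing a graph~$H$ with $n(H) = \nu^2 + 2\nu$. The observation to record is that every original vertex then has degree $\deg_G(v_i) + \nu + 1 \in [\nu+1, 2\nu]$ while every pendant vertex has degree~$1$; hence $\Delta := \Delta(H) = \Theta(\nu)$, and consequently $n(H) = \Theta(\Delta^2)$, say $\Delta^2/4 \le n(H) \le 3\Delta^2$ for~$\nu$ large enough.

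Then I would substitute $n(H) = \Theta(\Delta^2)$ into the two bounds of Proposition~\ref{mmvcbacicpro}, applying it with an auxiliary parameter $\gamma > 0$ in the role of its~$\delta$. On the hypothesis side, $r \ge \Delta^{1/2-\delta}$ implies $r \ge n(H)^{1/4-\gamma}$ as soon as $\gamma > \delta/2$ (and~$\Delta$ large), because $n(H)^{1/4-\gamma} \le 3^{1/4}\Delta^{1/2-2\gamma}$. On the running-time side, $O^*(2^{n(H)^{(1/2-\gamma)/r^{1+\gamma}}})$ dominates $O^*(2^{\Delta^{(1-\delta)/r^{1+\delta}}})$ as soon as $\gamma < \delta$ (and~$\Delta$ large): writing $n(H)^{(1/2-\gamma)/r^{1+\gamma}} = (1+o(1))\,\Delta^{(1-2\gamma)/r^{1+\gamma}}$, the exponent comparison reduces to $(1-2\gamma)\,r^{\delta-\gamma} \ge 1-\delta$, which holds for large~$r$ (hence large~$\Delta$) since $\delta-\gamma>0$. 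Picking any $\gamma$ in the non-empty interval $(\delta/2,\delta)$, for instance $\gamma = 3\delta/4$, meets both requirements, and ``no $r$-approximation in the larger running time'' gives ``no $r$-approximation in the claimed (smaller) one''.

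The only step needing a bit of care is this last piece of bookkeeping: one must exhibit a single~$\gamma$ compatible with both constraints ($\gamma > \delta/2$ from the hypothesis on~$r$, $\gamma < \delta$ from the time comparison) and check that the stray multiplicative constants are harmless — on the time side they enter only through the vanishing power $r^{-(1+\gamma)}$ and are moreover beaten by the $\ln\Delta$ hidden in $\Delta^{\Theta(1/r^{1+\gamma})}$, while on the hypothesis side they are swallowed by a positive power of~$\Delta$. Since the inapproximability statement is asymptotic, the requirement ``$\Delta$ large enough'' is free.
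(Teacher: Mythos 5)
Your derivation is correct and takes essentially the same route as the paper: the paper's own (one-line) proof of this corollary also observes that in the reduction of~\cite{mmvc_waoa} the instance~$H$ satisfies $\Delta(H)\geqslant n\approx\sqrt{n(H)}$, i.e.\ $n(H)=\Theta(\Delta^2)$, and then reads the corollary off Proposition~\ref{mmvcbacicpro} after renaming the parameter. Your explicit choice of an auxiliary $\gamma\in(\nicefrac{\delta}{2},\delta)$ merely spells out the bookkeeping the paper leaves implicit.
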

The result of Proposition~\ref{mmvcbacicpro} can be further strengthened by slightly changing the reduction of~\cite{mmvc_waoa}. 
Denote by~$c$ the stability ratio~$\nicefrac{\alpha(G)}{n}$ of~$G$. Then the following holds.
\begin{proposition}\label{mmvcproimp}
Under~{ETH}, for any $\delta>0$ and any $r \geqslant n^{\nicefrac{1}{2} - \delta}$,  in any graph of order~$nr$, \mmvc{} is inapproximable within ratio~$\nicefrac{(c+r)}{(1+c)}$ in time less than $O^*(2^{\nicefrac{n^{1-\delta}}{r^{1+\delta}}})$, where~$c$  the stability ratio of the \is{}-instance of~\cite{2013arXiv1308.2617C}.
\end{proposition}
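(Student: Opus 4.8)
The plan is to re-use the reduction of~\cite{mmvc_waoa} from \is{} to \mmvc{} recalled just above, but to replace the $n+1$ private pendant vertices hung on each vertex of the \is{}-instance by a \emph{tuned} number $q$ of pendants, with $q$ of order $\Theta(r)$; $q$ is chosen so that the produced \mmvc{}-instance has order $\Theta(nr)$ and so that the loss incurred on the \mmvc{} side is exactly $(c+r)/(1+c)$ when the loss on the \is{} side is $r$. Concretely, I would start from the hard \is{}-instance $G=(V,E)$ of~\cite{2013arXiv1308.2617C} on $n$ vertices with $\alpha(G)=cn$, build $H$ by adding, for every $v_i\in V$, a set $P_i$ of $q$ new vertices each joined only to $v_i$ (so $|V(H)|=n(q+1)$), apply Item~\ref{Chalermsook13thm1} of Theorem~\ref{Chalermsook13thm} to $G$, and push the resulting lower bound through the reduction.

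First I would establish the structural correspondence, exactly as in the $n+1$-pendant case underlying~(\ref{mmvc}). In any minimal vertex cover $C$ of $H$ one has, for each $i$, either $v_i\in C$ and $P_i\cap C=\emptyset$ (else some vertex of $P_i$ would be redundant), or $v_i\notin C$ and $P_i\subseteq C$ (to cover the edges joining $v_i$ to $P_i$); hence $A:=C\cap V$ is a vertex cover of $G$, and conversely every vertex cover $A$ of $G$ yields exactly one minimal vertex cover $A\cup\bigcup_{v_i\notin A}P_i$ of $H$, of size $|A|+q(n-|A|)=qn-(q-1)|A|$. Therefore $\opt(H)=\max_A\big(qn-(q-1)|A|\big)=n+(q-1)\alpha(G)$, the maximum being attained at a minimum vertex cover of $G$; and from any minimal vertex cover $C$ of $H$ one reads off in polynomial time the independent set $S:=V\setminus(C\cap V)$ of $G$, with $|C|=n+(q-1)|S|$.

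Next comes the approximation transfer. If $C$ is a $\rho$-approximate \mmvc{}-solution on $H$, then $n+(q-1)|S|=|C|\ge\opt(H)/\rho=\big(n+(q-1)cn\big)/\rho$, which rearranges into $|S|\ge n\big(1+(q-1)c-\rho\big)/\big((q-1)\rho\big)$, so the ratio of $S$ as an \is{}-solution for $G$ satisfies $\alpha(G)/|S|\le (q-1)c\,\rho/\big(1+(q-1)c-\rho\big)$. Taking $q-1:=\lceil r/c^2\rceil$ and $\rho:=(c+r)/(1+c)$, one checks $1+(q-1)c-\rho>0$ and $\alpha(G)/|S|\le r$ (for $q-1=r/c^2$ the denominator is $(c+r)/\big(c(1+c)\big)$ and the bound collapses to $r$; rounding $q$ up only lowers the bound). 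Hence a $(c+r)/(1+c)$-approximation for \mmvc{} on $H$ produces, in polynomial time, an $r$-approximation for \is{} on $G$, while $|V(H)|=n(q+1)=\Theta\big(nr/c^2\big)=\Theta(nr)$ since $c$ is an absolute constant.

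To conclude, I would apply Item~\ref{Chalermsook13thm1} of Theorem~\ref{Chalermsook13thm} to this very $G$: as $r\ge n^{1/2-\delta}$ is above the constant threshold there, no $r$-approximation for \is{} on $G$ runs in time $O^*(2^{n^{1-\delta}/r^{1+\delta}})$, so, composing with the polynomial-time reduction and absorbing its overhead together with the $\Theta(1)$ factors lost in the order and in the ratio into a fresh $\delta$, no $(c+r)/(1+c)$-approximation for \mmvc{} on a graph of order $\Theta(nr)$ runs in that time; the range $r\ge n^{1/2-\delta}$ is exactly the regime in which the statement is non-vacuous and the time bound stays subexponential in the order $\Theta(nr)$ of $H$. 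I expect the main obstacle to be the bookkeeping around $q$: one must verify simultaneously that $q=\Theta(r)$ makes $|V(H)|=\Theta(nr)$, that the transferred ratio lands \emph{exactly} on the pair $\big((c+r)/(1+c),\,r\big)$ (this is what forces $q-1=r/c^2$), that $1+(q-1)c-\rho$ stays positive so that the extracted independent set is genuinely non-empty, and that integrality of $q$, the polynomial overhead, and the hidden constants can all be swept into $\delta$; none of these points is deep, but together they carry all the care.
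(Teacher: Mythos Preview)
Your proof is correct and follows essentially the same route as the paper's: both attach $\Theta(r)$ pendant vertices to each vertex of the hard \is{}-instance of~\cite{2013arXiv1308.2617C} and transfer the $r$-inapproximability through the bijection between minimal vertex covers of~$H$ and vertex covers of~$G$. The only cosmetic difference is the pendant count---the paper hangs $r+1$ pendants per vertex and manipulates~(\ref{mmvc}) directly, whereas you tune $q-1=\lceil r/c^{2}\rceil$ so that the transferred ratio lands exactly on $(c+r)/(1+c)$; since $c$ is an absolute constant, both choices yield $|V(H)|=\Theta(nr)$.
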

\begin{proof}
Consider the \is{}-instance of Theorem~\ref{Chalermsook13thm} and link any of its vertices to $r+1$ new vertices where~$r$ is as in Item~\ref{Chalermsook13thm1} of Theorem~\ref{Chalermsook13thm}. The \mmvc{}-instance~$H$ has now $n(r+1)$ vertices.  Set $\rho'(H) = \nicefrac{\sol(H)}{\opt(H)}$, the inverse of the approximation ratio for \mmvc{} in~$H$. Then, using~(\ref{mmvc}), it holds that:
\begin{equation}\label{mmvcimp}
\frac{|S|}{\alpha(G)} \geqslant \rho'(H) - \frac{(1 - \rho'(H))n}{r\alpha(G)} 
\end{equation}
As one can see in the proof of Item~\ref{Chalermsook13thm1} of Theorem~\ref{Chalermsook13thm},~$\alpha(G)$ is linear in~$n$, i.e., $\alpha(G) \geqslant cn$ for some fixed (independent on~$n$) $c < 1$. So,~(\ref{mmvcimp}) becomes:
\begin{equation}\label{mmvcimp1}
\frac{1}{r} \geqslant \frac{|S|}{\alpha(G)} \geqslant \rho'(H) - \frac{(1 - \rho'(H))c}{r} \geqslant \rho'(H) - \frac{c}{r}
\end{equation}
where the first inequality above is due to the inapproximability bound~$1/r$ for \is{} in the graph of Item~\ref{Chalermsook13thm1} of Theorem~\ref{Chalermsook13thm}. Then some simple algebra derives $\rho(H) = \nicefrac{1}{\rho'(H)} \geqslant \nicefrac{c+r}{1+c}$, as claimed.
\end{proof}
Interestingly enough, although \ids{} is one of the hardest problems for polynomial approximation, only  subexponential inapproximability within ratio~$\nicefrac{7}{6}-\varepsilon$ can be proved for it, using sparsification. The following proposition gives a stronger subexponential inapproximability bound
%
%
for \ids{} using the fact that an independent dominating set in some graph~$G$  is the complement of a minimal vertex cover of~$G$.
\begin{proposition}\label{proids}
Under~{ETH}, for any $\delta>0$ and any $r \geqslant n^{\nicefrac{1}{2} - \delta}$, in any graph of order~$nr$, \ids{} is inapproximable within ratio~$\nicefrac{1}{(1-c)}$ in time less than $O^*(2^{n^{\nicefrac{1-\delta}{r^{1+\delta}}}})$, where~$c$ is the stability ratio~$\nicefrac{\alpha(G)}{n}$ of the \is{}-instance of~\cite{2013arXiv1308.2617C}.
\end{proposition}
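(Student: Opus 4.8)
The plan is to exploit the observation recorded just before the statement: in an arbitrary graph $H$, a vertex set is an independent dominating set exactly when it is a maximal independent set, i.e.\ exactly when its complement is a minimal vertex cover. Thus $D \mapsto V(H) \setminus D$ is a bijection between independent dominating sets of $H$ and minimal vertex covers of $H$ that sends a solution of size $s$ to one of size $|V(H)| - s$; in particular the minimum independent dominating set of $H$ has size $|V(H)| - \mu(H)$, where $\mu(H)$ denotes the maximum size of a minimal vertex cover of $H$. Note, however, that a good \emph{multiplicative} ratio for one of these two problems does not translate into a good ratio for the other under this bijection, so I would not invoke Proposition~\ref{mmvcproimp} as a black box, but rather recycle its construction together with the \emph{additive} estimates on which it rests.

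Concretely, I would take the very graph $H$ built in the proof of Proposition~\ref{mmvcproimp}: start from the \is{}-instance $G$ of~\cite{2013arXiv1308.2617C}, of order $n$ and with $\alpha(G) = cn$, and attach $\Theta(r)$ pendant vertices to each vertex of $G$, obtaining a graph $H$ of order $nr$ up to lower-order terms. The reduction there, whose key estimates are~(\ref{mmvc}), shows that the minimal vertex covers of $H$ are exactly the sets ``$V(G) \setminus S$ together with all pendants attached to $S$'', one per independent set $S$ of $G$, of size $n + r|S|$ up to lower-order slack; complementing, every independent dominating set of $H$ has size $nr - r|S|$ for some independent set $S$ of $G$, the minimum being $nr - r\alpha(G)$, and the all-pendants set is a feasible independent dominating set of size $\approx nr$.

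Then I would argue by contradiction. Assume \ids{} admits, on graphs of order $nr$, a $\rho$-approximation with $\rho$ just below $\nicefrac{1}{1-c}$ running in time $O^*(2^{n^{\nicefrac{1-\delta}{r^{1+\delta}}}})$. Running it on $H$ produces an independent dominating set $D$ with $|D| \le \rho\,(nr - r\alpha(G))$; reading off from $|D| = nr - r|S|$ the independent set $S$ of $G$ it encodes and substituting $\alpha(G) = cn$, a one-line computation gives $|S| \ge \nicefrac{\alpha(G)}{r}$ as soon as $\rho(1-c) \le 1$ modulo the $O(\nicefrac{1}{r})$ correction terms. This is an $r$-approximation of \is{} on $G$, obtained within essentially the same subexponential running time since the reduction $G \mapsto H$ is polynomial, contradicting Item~\ref{Chalermsook13thm1} of Theorem~\ref{Chalermsook13thm}; the running-time arithmetic is handled exactly as for Proposition~\ref{mmvcproimp}.

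The only delicate point — and the main obstacle — is the bookkeeping between the leading $\Theta(nr)$ term and the additive $O(n)$ terms contributed by the original vertices of $G$: the threshold $\nicefrac{1}{1-c}$ is attained only in the limit $r \to \infty$, so what is genuinely established is inapproximability within $\nicefrac{1}{1-c} - o(1)$, which is enough here because $r \ge n^{\nicefrac{1}{2} - \delta} \to \infty$. Checking that the exponents in the running-time bound remain of the claimed shape is routine and mirrors the computations already carried out for the \vc{}-type reductions in Theorem~\ref{nlred}.
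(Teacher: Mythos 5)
Your argument is correct and ends at the same place, but it follows a genuinely different route from the paper's. The paper does precisely what you decided to avoid: it keeps the graph $H$ of Proposition~\ref{mmvcproimp}, takes $b$ to be the inverse of the \mmvc{} inapproximability bound just established there (so $b=o(1)$ since that bound grows with $r$), computes $\iota(H)=\alpha(G)+(n-\alpha(G))(r+1)$, and uses the complementation identity $\iota'(H)=n(r+1)-\sol(H)$ to derive $\nicefrac{\iota'(H)}{\iota(H)}\geqslant \nicefrac{(1-b)}{(1-c)}+b\sim \nicefrac{1}{(1-c)}$. So your worry that ``a multiplicative ratio does not transfer under the bijection'' is right as a general principle, but the paper does not perform a naive black-box transfer either: it combines the \mmvc{} bound with exactly the structural data you also exploit (the total order $n(r+1)$ and the value of the optimum). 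Your proof instead bypasses \mmvc{} altogether: you characterize every independent dominating set of $H$ as an independent set $S$ of $G$ together with all pendants attached to $V\setminus S$, of size $n(r+1)-r|S|$, and read off directly an $r$-approximate independent set of $G$, contradicting Item~\ref{Chalermsook13thm1} of Theorem~\ref{Chalermsook13thm}. What your route buys is self-containedness (no dependence on Proposition~\ref{mmvcproimp}) and an explicit description of the feasible solutions of $H$; what the paper's route buys is brevity, since the \mmvc{} bound is already in hand. Both arguments, as you correctly flag, establish the ratio $\nicefrac{1}{(1-c)}$ only up to an additive $o(1)$ as $r\to\infty$ (your threshold is roughly $\nicefrac{(r+1-c)}{((1-c)r+1)}$, the paper's is $\nicefrac{(1-b)}{(1-c)}+b$), a caveat the paper's own ``$\sim$'' silently absorbs; the hypothesis $r\geqslant n^{\nicefrac{1}{2}-\delta}$ is what makes this slack vanish.
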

\begin{proof}
Consider again the graph~$G$ built in Item~\ref{Chalermsook13thm1} of Theorem~\ref{Chalermsook13thm} and the reduction of Proposition~\ref{mmvcproimp} to \mmvc{}. Denote by~$c$ the stability ratio of~$G$, i.e., $c = \nicefrac{\alpha(G)}{n}$, and recall that~$c$ is a fixed constant~\cite{2013arXiv1308.2617C}. Then:
\begin{equation}\label{idseq0}
\iota(H) = \alpha(G) + (n-\alpha(G))(r+1) = (1-c)n(r+1)
\end{equation}
 Denote by~$\iota'(H)$, the independent dominating set associated with the approximate minimal vertex cover of~$H$, i.e., $\iota'(H) = n(r+1) - \sol(H)$ and by~$b$ the inverse of the inapproximability bound for \mmvc{} ($b < 1$). Then, using~(\ref{idseq0}), we get:
\begin{eqnarray*}
b \geqslant \frac{\sol(H)}{\opt(H)} \geqslant \frac{n(r+1) - \iota'(H)}{n(r+1) - \iota(H)} &\Longrightarrow& \frac{\iota'(H)}{\iota(H)} \geqslant \frac{n(r+1)(1-b)}{\iota(H)} + b \\
&\geqslant& \frac{1-b}{1-c} + b \sim \frac{1}{1-c}
\end{eqnarray*}
where the last approximation for~$b$ is due to the fact that $b = o(1)$.
\end{proof}
%

\section{More about sparsifiers}\label{negexres}

Revisit the informal description of sparsification in Section~\ref{subspars}. The sparsifier designed in~\cite{impa} may yield very weak lower bounds, in the sense that~$f(\lambda)$ may be very close to~1. 
%
%
%
%
Suppose that there exists a polynomial time reduction~$\mathsf{R}$ from \textsc{$k$-sat} to a problem~$\Pi$, and two integers~$\alpha$ and~$\beta$ such that, for an instance~$\phi$ of \textsc{$k$-sat} with~$n$ variables and~$m$ clauses, $\mathsf{R}(\phi)$ is of size $\alpha n+\beta m$. 
To solve an instance of \textsc{$k$-sat} on~$\phi$, one can sparsify it, reduce all the~$2^{\varepsilon n}$ sparsified formul{\ae}, and solve each instance of~$\Pi$ built by application of~$\mathsf{R}$ to any sparse instance produced from~$\phi$. 
This takes time $O^*((2^{\varepsilon} \lambda^{\alpha+\beta C_k^\varepsilon})^n)$. 
Assuming~ETH, let~$\lambda_k$ be the smallest real number such that 
$k$-SAT is solvable in~$O^*(\lambda_k^n)$. 
Then, $2^{\varepsilon} \lambda^{\alpha+\beta C_k^\varepsilon} \geqslant \lambda_k$.
Adjusting~$\varepsilon$ to get the best possible lower bound for~$\lambda$, one gets $\lambda-1 < 10^{-10}$, for plausible values of~$\alpha$ and~$\beta$. So, one only shows that~$\Pi$ is not solvable in, say, $O^*((1+10^{-10})^n)$.

We show that the superlinear sparsifier of Section~\ref{subspars} may be used to produce stronger lower bounds than those get by the sparsifier of~\cite{impa}.
%
%
In order to do that, we will use the central problem of the paper, the \is{} problem.
Assume~$\mathcal H_{\mathrm{IS}}(\lambda)$ is the hypothesis that \is{} is not solvable in time $O^*(\lambda^n)$, and $g:(1,2) \rightarrow \mathbb{N}$ maps any real value $x$ in $(1,2)$ to the smallest integer $p$ such that the positive root $X^{p+1}-X^p-1=0$ is smaller than $x$.
The superlinear sparsifier can be used to show the following.
\begin{proposition}\label{his}
Let~$\Pi$ be problem such that there exists a polynomial time reduction~$\mathsf{R}$ from \is{} to~$\Pi$ and two positive numbers~$\alpha$ and~$\beta$ satisfying, for all instances~$G$ of \is{}, $|\mathsf{R}(G(V,E))| \leqslant \alpha |V| + \beta |E| = \alpha n + \beta m$.
Under $\mathcal H_{\mathrm{IS}}(\lambda)$, if $\Pi$ is solvable in $O^*(\mu^n)$, then $\mu > \lambda^{\nicefrac{1}{\alpha+\lfloor \nicefrac{g(\lambda)}{2} \rfloor \beta}}$ 
\end{proposition}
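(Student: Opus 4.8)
The plan is to mimic the proof technique of Theorem~\ref{nlred}, but to track constants explicitly instead of swallowing them into the $O^*$ notation, and to use the \emph{exact} branching analysis of Lemma~\ref{lem1} rather than its asymptotic simplification. Concretely, suppose for contradiction that $\Pi$ is solvable in $O^*(\mu^n)$ with $\mu \leqslant \lambda^{1/(\alpha + \lfloor g(\lambda)/2 \rfloor\,\beta)}$. Starting from an instance $G$ of \is{} on $n$ vertices, first I would run the \is{} branching of Lemma~\ref{lem2} \emph{with a constant threshold}, namely branch on any vertex of degree at least $p := g(\lambda)$ using the branching vector $(1,p+1)$ (include a max-degree vertex and delete its closed neighbourhood, or delete just the vertex), which by the first bullet of Lemma~\ref{lem1}'s setup produces $O^*(\lambda_p^n)$ leaf instances, where $\lambda_p$ is the positive root of $X^{p+1}-X^p-1=0$. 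By the very definition of $g$, since $p = g(\lambda)$ we have $\lambda_p < \lambda$; write $\lambda_p^n = \beta_p^n$ with $\beta_p < \lambda$. Each leaf is a $p$-sparse instance $G_i$ of \is{}, and \is{} on $G$ is a yes/optimal instance iff it is so on some $G_i$, exactly as in the classical sparsification argument.

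Next I would apply $\mathsf{R}$ to each leaf $G_i$. A $p$-sparse graph on $n_i \leqslant n$ vertices has at most $p\,n_i/2 \leqslant pn/2$ edges, and $p$ being a constant means we may take $\lfloor p/2\rfloor$ as the edge-to-vertex ratio bound (the branching only deletes vertices, so the degree bound is genuinely $p$, and one should use $\lfloor g(\lambda)/2\rfloor$ rather than $g(\lambda)/2$ to match the statement — this is exactly where the floor in the exponent comes from, after rounding the $pn_i/2$ edge count down). Hence $|\mathsf{R}(G_i)| \leqslant \alpha n_i + \beta\lfloor p/2\rfloor n_i \leqslant (\alpha + \lfloor g(\lambda)/2\rfloor\,\beta)\,n =: \gamma n$. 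Now solving $\Pi$ on all leaves by the assumed $O^*(\mu^{\gamma n})$ algorithm, and combining, gives an \is{} algorithm running in time
\begin{displaymath}
O^*\!\left(\beta_p^n \cdot \mu^{\gamma n}\right) = O^*\!\left(\left(\beta_p\,\mu^{\gamma}\right)^{n}\right).
\end{displaymath}
Since $\mu^{\gamma} \leqslant \lambda$ by our contradiction hypothesis and $\beta_p < \lambda$, this is $O^*((\lambda\beta_p)^n)$ — but that is not yet below $\lambda^n$, so a plain product bound is not enough.

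To fix this I would instead choose $\varepsilon>0$ small and branch only on vertices of degree at least $p$ \emph{while also controlling the branching more tightly}: the right move is to note that $\mathcal H_{\mathrm{IS}}(\lambda)$ gives us slack, because $g$ is defined via a strict inequality, so $\beta_p = \lambda_{g(\lambda)} < \lambda$ with a genuine gap. Then if $\mu^{\gamma} = \lambda$ exactly (the boundary case), the combined running time $O^*((\beta_p\lambda)^n)$ still exceeds $O^*(\lambda^n)$, so I would instead argue at the level of the \emph{reduction output size}: we need $\beta_p \cdot \mu^{\gamma} \leqslant \lambda$, i.e. $\mu \leqslant (\lambda/\beta_p)^{1/\gamma}$, and since $\lambda/\beta_p > 1$ this is strictly weaker than $\mu \leqslant \lambda^{1/\gamma}$ only if we are sloppy — actually $(\lambda/\beta_p)^{1/\gamma} < \lambda^{1/\gamma}$, so the clean statement $\mu > \lambda^{1/\gamma}$ requires absorbing $\beta_p$ by letting the degree threshold $p$ grow. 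The honest route, and the one I would write up, is: for any $\delta>0$ pick the threshold $p$ large enough (still constant) that $\lambda_p < \lambda$ but also, and this is the crux, use a \emph{super-constant but sublinear} threshold $\psi(n) = n^{o(1)}$ in Lemma~\ref{lem1}, giving a sparsification cost of $O^*(2^{n/\psi(n)}) = O^*(\lambda^{o(n)})$, which is negligible against $\mu^{\gamma n}$; then $\gamma = \alpha + \lfloor\psi(n)/2\rfloor\beta$ is no longer constant, so to recover the constant bound $\lfloor g(\lambda)/2\rfloor$ in the exponent one must keep $\psi \equiv g(\lambda)$ constant and accept the $\beta_p$ factor — hence the statement should really read $\mu^{\alpha + \lfloor g(\lambda)/2\rfloor\beta} \geqslant \lambda/\lambda_{g(\lambda)}$, and the stated form follows by noting $\lambda_{g(\lambda)} < \lambda$ only up to the $o(n)$ terms that $O^*$ hides. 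I expect the main obstacle to be precisely this bookkeeping: reconciling the strict-inequality definition of $g$ with the need for the leaf-count factor $\lambda_{g(\lambda)}^n$ to be genuinely subsumed, which forces either a careful $\varepsilon$-adjustment of the degree threshold or an honest restatement with $O^*$ hiding the discrepancy — the rest (edge-counting to get $\alpha n + \lfloor g(\lambda)/2\rfloor\beta n$, and the equivalence of the sparsified instances) is routine.
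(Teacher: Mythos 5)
You set up exactly the right construction (branch on vertices of degree at least $g(\lambda)$ with vector $(1,g(\lambda)+1)$, so every leaf is sparse with at most $\lfloor \nicefrac{g(\lambda)}{2}\rfloor$ edges per remaining vertex, then apply $\mathsf{R}$ and the assumed $O^*(\mu^n)$ algorithm at each leaf), but your accounting has a genuine gap and it leads you to a wrong conclusion. You bound the total cost by $(\text{number of leaves})\times(\text{worst leaf cost}) = \lambda_{g(\lambda)}^n\cdot\mu^{\gamma n}$ with $\gamma=\alpha+\lfloor \nicefrac{g(\lambda)}{2}\rfloor\beta$, correctly note this is not below $\lambda^n$, and then decide the proposition must be weakened or proved with a super-constant threshold. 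The flaw is that this product bound double-counts: a leaf can only be expensive to solve if it is cheap to reach. The paper's proof charges each leaf by the number $k$ of vertices removed along the path to it: the number of leaves at which exactly $k$ vertices have been deleted is at most $\lambda_{g(\lambda)}^k\leqslant\lambda^k$ (the standard count for branching vector $(1,g(\lambda)+1)$), and such a leaf has only $n-k$ vertices left, so $\mathsf{R}$ maps it to an instance of size at most $\gamma(n-k)$, solved in $O^*(\mu^{\gamma(n-k)})$. The total time is thus $O^*\bigl(\max_{0\leqslant k\leqslant n}\lambda^k\mu^{\gamma(n-k)}\bigr)$, and if $\mu\leqslant\lambda^{\nicefrac{1}{\gamma}}$ every term is at most $\lambda^k\lambda^{n-k}=\lambda^n$, giving an $O^*(\lambda^n)$ algorithm for \is{} and contradicting $\mathcal H_{\mathrm{IS}}(\lambda)$ directly. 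No slack from $\lambda_{g(\lambda)}<\lambda$ is needed beyond guaranteeing the leaf count is at most $\lambda^k$, no $\varepsilon$-adjustment of the threshold, and no restatement: the proposition holds as written.

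A minor secondary point: your justification of the factor $\lfloor \nicefrac{g(\lambda)}{2}\rfloor$ by ``rounding $pn_i/2$ down'' does not work for odd $p$ if the leaf degree really were $p$; the correct reason is that branching continues while some vertex has degree at least $g(\lambda)$, so leaves have maximum degree at most $g(\lambda)-1$ and hence at most $\nicefrac{(g(\lambda)-1)(n-k)}{2}\leqslant\lfloor \nicefrac{g(\lambda)}{2}\rfloor(n-k)$ edges.
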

\begin{proof} 
Use the superlinear sparsifier with the threshold $\Delta = g(\lambda)$, that is, stop the branching when the degree of the graph becomes strictly less than~$g(\lambda)$.
The branching factor is the positive root of $X^{g(\lambda)+1}-X^{g(\lambda)}-1=0$ which, by construction, is smaller than~$\lambda$. 
At a leaf of the branching tree, if the number of vertices is $n-k$, then the number of edges in the remaining graph is at most $\lfloor \nicefrac{g(\lambda)}{2} \rfloor (n-k)$.

Thus, by performing the reduction~$\mathsf{R}$ on the instances at each leaf of the branching tree, and then solving the obtained instances of~$\Pi$, one gets an algorithm solving \is{} in time $O^*(\lambda^k\mu^{(\alpha+\lfloor \nicefrac{g(\lambda)}{2} \rfloor \beta)(n-k)})$.
So,  $\mu >  \lambda^{\nicefrac{1}{\alpha+\lfloor \nicefrac{g(\lambda)}{2} \rfloor \beta}}$, otherwise $\lambda^k\mu^{(\alpha+\lfloor \nicefrac{g(\lambda)}{2} \rfloor \beta)(n-k)} \leqslant \lambda^n$.
\end{proof}
Since the superlinear sparsifier is approximation preserving, if reduction~\textsf{R} from \is{} to~$\Pi$ preserves approximation, one can obtain relative exponential time lower bounds even for approximation issues.
The following proposition provides a lower bound to the best currently known complexity (function of the number of clauses) of \m3sat{}, under~$\mathcal H_{\mathrm{IS}}$.
Note that the best known running time for \m3sat{} is~$O^*(1.324^m)$. 
\begin{proposition}\label{hissat}
Under $\mathcal H_{\mathrm{IS}}(\lambda)$, \m3sat{} is not solvable in~$O^*(\lambda^{(\nicefrac{1}{1+\lfloor \nicefrac{g(\lambda)}{2} \rfloor})n})$.
\end{proposition}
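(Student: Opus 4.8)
The plan is to realise \m3sat{} as the target of an $\mathsf{R}$-reduction from \is{} and then read off the lower bound from Proposition~\ref{his}. Since Proposition~\ref{hissat} concerns \emph{exact} solving, $\mathsf{R}$ need not preserve approximation; all that is required is a polynomial-time reduction from \is{} to \m3sat{} with $|\mathsf{R}(G)| \leqslant \alpha n + \beta m$ and, in order to land on the exponent $1/(1+\lfloor g(\lambda)/2\rfloor)$, with $\alpha=\beta=1$ once the size of a CNF formula is measured by its number of clauses (the parameter for which the $O^*(1.324^m)$ bound for \m3sat{} is stated). I would use the following lean reduction. Given $G(V,E)$, create a variable $x_v$ for each $v\in V$; for each $v\in V$ add the unit clause $(x_v)$ (the \emph{vertex clauses}), and for each edge $uv\in E$ add the clause $(\bar{x}_u\vee\bar{x}_v)$ (the \emph{edge clauses}). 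This is computable in polynomial time, produces a CNF formula $\mathsf{R}(G)$ with exactly $n+m$ clauses, each of width at most~$3$, hence a bona fide instance of \m3sat{}. (If the convention for \m3sat{} demands clauses of width exactly~$3$, pad each clause with one or two fresh variables together with the clauses obtained by negating the padding literals in all ways; this only inflates $\alpha,\beta$ to slightly larger constants, changing the exponent accordingly, the argument being otherwise identical.)

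Next I would verify that $\mathsf{R}$ computes $\alpha(G)$. Given an assignment $\tau$, let $S=\{v\in V:\tau(x_v)=1\}$; then $\tau$ satisfies exactly the $|S|$ vertex clauses of the vertices in $S$ and exactly the $m-e_G(S)$ edge clauses whose edge is not contained in $S$, where $e_G(S)$ is the number of edges of $G$ with both endpoints in $S$. Hence $\opt(\mathsf{R}(G))=m+\max_{S\subseteq V}\bigl(|S|-e_G(S)\bigr)$. If $S$ contains an edge $uv$, then deleting $u$ from $S$ lowers $|S|$ by $1$ and $e_G(S)$ by at least $1$, so it does not decrease $|S|-e_G(S)$; iterating, the maximum is attained at an independent set, where $|S|-e_G(S)=|S|$, so it equals $\alpha(G)$. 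Thus $\opt(\mathsf{R}(G))=\alpha(G)+m$, and running an exact \m3sat{} algorithm on $\mathsf{R}(G)$ and subtracting the known quantity $m$ returns $\alpha(G)$. So $\mathsf{R}$ is a polynomial-time reduction from \is{} to \m3sat{} with $\alpha=\beta=1$.

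Finally I would feed $\mathsf{R}$ into Proposition~\ref{his} with $\Pi=\m3sat{}$: under $\mathcal H_{\mathrm{IS}}(\lambda)$, if \m3sat{} were solvable in $O^*(\mu^n)$ (with $n$ the number of clauses) then necessarily $\mu>\lambda^{1/(\alpha+\lfloor g(\lambda)/2\rfloor\beta)}=\lambda^{1/(1+\lfloor g(\lambda)/2\rfloor)}$; equivalently, \m3sat{} is not solvable in $O^*\bigl(\lambda^{(1/(1+\lfloor g(\lambda)/2\rfloor))n}\bigr)$, as claimed. The one step that needs care is the very first: the textbook reductions from \is{} to max-SAT are not approximation preserving because of the additive $+m$ discrepancy between optima, but this defect is immaterial here (we only solve exactly), and the point is merely to keep the clause count at $n+m$ so that both slopes $\alpha$ and $\beta$ equal $1$ and the exponent comes out as advertised.
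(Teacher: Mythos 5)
Your proof is correct and takes essentially the same route as the paper: the paper also uses the \cite{pymaxsnp} reduction (one variable per vertex, a unit clause $X_i$ per vertex and a clause $\neg X_i \lor \neg X_j$ per edge, hence $n+m$ clauses and $\alpha=\beta=1$) and then concludes directly from Proposition~\ref{his}. Your explicit check that the optimum of the \m3sat{} instance equals $\alpha(G)+m$ is just a spelled-out version of what the paper delegates to the cited reference.
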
 
\begin{proof}
We recall the reduction in~\cite{pymaxsnp}. 
An instance $(G(V,E),k)$ of the decision version of \is{} is transformed into an instance of the decision version of \m3sat{} in the following way: each vertex $v_i \in V$ encodes a variable~$X_i$ and for each edge $(v_i,v_j) \in E$ we add a clause $\neg X_i \lor \neg X_j$.
Finally, we add the 1-clause~$X_i$ for all $v_i \in V$. 
In the so built instance of \m3sat{} we wish to satisfy at least $k+m$ clauses. 
This reduction builts $n+m$ clauses, so $\alpha=\beta=1$. 
Hence, under $\mathcal H_{\mathrm{IS}}$, and according to Proposition~\ref{his}, one cannot solve \m3sat{} in time~$O^*(\mu^n)$ when $\mu = \lambda^{\nicefrac{1}{1+\lfloor \nicefrac{g(\lambda)}{2} \rfloor}}$.
\end{proof}
Suppose that~$\Pi$ is a problem (like \m3sat{} when considering its complexity in terms of~$m$) with a reduction from \is{} in $n+m$ ($\alpha=\beta=1$), and~$\Pi$ is solvable in $O^*(\mu^n)$. 
Then, the following table gives some values of~$\mu$ as function of~$\lambda$.
\begin{center}
\begin{tabular}{cc}
\hline
$\lambda$ & Infeasible value for $\mu$ \\
\hline
1.1 & 1.0073 \\
1.18 & 1.027 \\
1.21 & 1.038 \\
\hline
\end{tabular}
\end{center}


We conclude the paper by pointing out that the $k$-step sparsifier of Section~\ref{polysparssec} has also some interesting consequences when handling parameterized issues. \is{} can be solved in time $O^*((\Delta+1)^\alpha)$ with a standard branching algorithm~\cite{niedermeier06} (here $\alpha=\alpha(G)$ is the size of a maximum independent set, or equivalently the natural parameter for \is{}). The excavation performed by the $k$-step sparsifier can be used to obtain an algorithm running in time $O^*(2^{(\Delta-2)\alpha})$.
Indeed, one can excavate consecutively $\Delta-2$ maximal independent sets~$S_1$ to~$S_{\Delta-2}$, where each~$S_i$ is a maximal independent set in $G[V \setminus \bigcup_{k=1 \ldots i-1} S_k]$.
By hypothesis, for all~$i$, $|S_i| \leqslant \alpha$, so an exhaustive search on $\bigcup_{k=1 \ldots \Delta-2}S_i$ takes time $O^*(2^{(\Delta-2)\alpha})$.
Graph $G[V \setminus \bigcup_{k=1 \ldots \Delta-2} S_k]$ is a graph with degree~2, hence it takes polynomial time to complete a solution by finding a maximum independent set on this part of the graph.
This algorithm improves the branching algorithm for $\Delta \leqslant 4$.
as the following table shows.
\begin{center}
\begin{tabular}{ccc}
\hline
$\Delta$ & ~~~~Exhaustive branching~~~~ & Sparsification \\
\hline
3 & $4^\alpha$ & $2^\alpha$ \\
4 & $5^\alpha$ & $4^\alpha$ \\
\hline
\end{tabular}
\end{center}


\begin{thebibliography}{10}

\bibitem{jo}
Johnson, D.S.:
\newblock Approximation algorithms for combinatorial problems.
\newblock J.~Comput. System Sci. \textbf{9} (1974)  256--278

\bibitem{motwaj}
Arora, S., Lund, C., Motwani, R., Sudan, M., Szegedy, M.:
\newblock Proof verification and intractability of approximation problems.
\newblock J.~Assoc. Comput. Mach. \textbf{45} (1998)  501--555

\bibitem{zucker}
Zuckerman, D.:
\newblock Linear degree extractors and the inapproximability of max clique and
  chromatic number.
\newblock In: Proc. STOC'06. (2006)  681--690

\bibitem{dofemcciwpec}
Downey, R.G., Fellows, M.R., {McCartin}, C.:
\newblock Parameterized approximation problems.
\newblock In Bodlaender, H.L., Langston, M.A., eds.: Proc. International
  Workshop on Parameterized and Exact Computation, IWPEC'06. Volume 4169 of
  Lecture Notes in Computer Science., Springer-Verlag (2006)  121--129

\bibitem{caihuiwpec}
Cai, L., Huang, X.:
\newblock Fixed-parameter approximation: conceptual framework and
  approximability results.
\newblock In Bodlaender, H.L., Langston, M.A., eds.: Proc. International
  Workshop on Parameterized and Exact Computation, IWPEC'06. Volume 4169 of
  Lecture Notes in Computer Science., Springer-Verlag (2006)  96--108

\bibitem{chgrogruiwpec}
Chen, Y., Grohe, M., Gr\"{u}ber, M.:
\newblock On parameterized approximability.
\newblock In Bodlaender, H.L., Langston, M.A., eds.: Proc. International
  Workshop on Parameterized and Exact Computation, IWPEC'06. Volume 4169 of
  Lecture Notes in Computer Science., Springer-Verlag (2006)  109--120

\bibitem{effapproxcah}
Bourgeois, N., {{Escoffier}}, B., {Paschos}, V.T.:
\newblock Efficient approximation by ``low-complexity'' exponential algorithms.
\newblock Cahier du LAMSADE 271, LAMSADE, Universit\'{e} Paris-Dauphine (2007)
  Available at
  \path+http://www.lamsade.dauphine.fr/cahiers/PDF/cahierLamsade271.pdf+.

\bibitem{effapprox}
Bourgeois, N., {{Escoffier}}, B., {Paschos}, V.T.:
\newblock Approximation of \textsc{max independent set}, \textsc{min vertex
  cover} and related problems by moderately exponential algorithms.
\newblock Discrete Appl. Math. \textbf{159} (2011)  1954--1970

\bibitem{CyganP10}
Cygan, M., Pilipczuk, M.:
\newblock Exact and approximate bandwidth.
\newblock Theoret. Comput. Sci. \textbf{411} (2010)  3701--3713

\bibitem{FurerGK09}
F{\"u}rer, M., Gaspers, S., Kasiviswanathan, S.P.:
\newblock An exponential time 2-approximation algorithm for bandwidth.
\newblock In: Proc. International Workshop on Parameterized and Exact
  Computation, IWPEC'09. Volume 5917 of Lecture Notes in Computer Science.,
  Springer (2009)  173--184

\bibitem{subexpo-fpt-inapprox-ipec13}
Bonnet, E., Escoffier, B., Kim, E., Paschos, V.T.:
\newblock On subexponential and fpt-time inapproximability.
\newblock In Gutin, G., Szeider, S., eds.: Proc. International Workshop on
  Parameterized and Exact Computation, IPEC'13. Volume 8246 of Lecture Notes in
  Computer Science., Springer-Verlag (2013)  54--65

\bibitem{impa}
Impagliazzo, R., Paturi, R., Zane, F.:
\newblock Which problems have strongly exponential complexity?
\newblock J.~Comput. System Sci. \textbf{63} (2001)  512--530

\bibitem{2013arXiv1308.2617C}
{Chalermsook}, P., {Laekhanukit}, B., {Nanongkai}, D.:
\newblock Independent set, induced matching, and pricing: connections and tight
  (subexponential time) approximation hardnesses.
\newblock CoRR \textbf{abs/1308.2617, abs/1308.2617} (2013)

\bibitem{DBLP:conf/focs/MoshkovitzR08}
Moshkovitz, D., Raz, R.:
\newblock Two query {PCP} with sub-constant error.
\newblock In: Proc. FOCS'08. (2008)  314--323

\bibitem{ausiellobook}
Ausiello, G., Crescenzi, P., Gambosi, G., Kann, V., Marchetti-Spaccamela, A.,
  Protasi, M.:
\newblock Complexity and approximation. Combinatorial optimization problems and
  their approximability properties.
\newblock Springer-Verlag, Berlin (1999)

\bibitem{pymaxsnp}
Papadimitriou, C.H., Yannakakis, M.:
\newblock Optimization, approximation and complexity classes.
\newblock J.~Comput. System Sci. \textbf{43} (1991)  425--440

\bibitem{karpcl}
Karp, R.M.:
\newblock Reducibility among combinatorial problems.
\newblock In Miller, R.E., Thatcher, J.W., eds.: Complexity of computer
  computations.
\newblock Plenum Press, New York (1972)  85--103

\bibitem{marathe-ravi}
Marathe, M., Ravi, S.:
\newblock On approximation algorithms for the minimum satisfiability problem.
\newblock Inform. Process. Lett. \textbf{58} (1996)  23--29

\bibitem{mmvc_waoa}
Boria, N., {Della Croce}, F., Paschos, V.:
\newblock On the \textsc{max min vertex cover} problem.
\newblock In: Proc. Workshop on Approximation and Online Algorithms, WAOA'13.
  Lecture Notes in Computer Science, Springer-Verlag (2013)

\bibitem{niedermeier06}
Niedermeier, R.:
\newblock Invitation to fixed-parameter algorithms.
\newblock Oxford Lecture Series in Mathematics and its Applications. Oxford
  University Press, Oxford (2006)

\end{thebibliography}

\newpage

\appendix

\section{Definition of the problems handled in the paper}

\begin{itemize}
\item \is{}. Given a graph~$G(V,E)$, determine a maximum cardinality set $V' \subseteq V$, such that any two vertices of~$V'$ are not adjacent in~$G$.
\item \vc{}. Given a graph~$G(V,E)$, determine a minimum cardinality set $V' \subseteq V$, such that any edge in~$E$ has at least one of its endpoints in~$V'$.
\item \ds{}.  Given a graph~$G(V,E)$, determine a minimum cardinality set $V' \subseteq V$ such that every vertex $v \in V \setminus V'$ is neighbor of some vertex in~$V'$.
\item \ids{}. Given a graph~$G(V,E)$, determine a minimum cardinality set $V' \subseteq V$ that is simultaneously an independent and a dominating set.
\item \fvs{}. Given a graph~$G(V,E)$, determine a minimum cardinality set $V' \subseteq V$, such that any cycle of~$G$ has at least one vertex in~$V'$.
\item \cb{}. Given a graph~$G(V,E)$, determine a maximum cardinality set $V' \subseteq V$ that induces a complete bipartite graph.
\item \lcol{}. Given a graph~$G(V,E)$ and some fixed constant~$\ell$, determine a maximum cardinality set $V' \subseteq V$ that induces an $\ell$-colorable graph.
\item \textsc{max planar induced subgraph}. Given a graph~$G(V,E)$, determine a maximum cardinality set $V' \subseteq V$ that induces a planar graph.
\item \textsc{min set cover}. Given a system~$\mathcal{S}$ of subsets of a ground set~$C$, determine a minimum cardinality subsystem~$\mathcal{S}'$ that covers~$C$.
\item \textsc{min hitting set}. Given a system~$\mathcal{S}$ of subsets of a ground set~$C$, determine a minimum cardinality subset $C' \subseteq C$ that hits all the sets of~$\mathcal{S}'$.
\item \sp{}. Given a system~$\mathcal{S}$ of subsets of a ground set~$C$, determine a maximum cardinality subsystem~$\mathcal{S}'$ of pairwise disjoint sets.
\item \mmvc{}. Given a graph~$G(V,E)$, determine a maximum cardinality set $V' \subseteq V$, that is a minimal (for exclusion) vertex cover of~$G$.
\item \mus{}. Given a system~$\mathcal{S}$ of subsets of a ground set~$C$, determine a maximum cardinality subsystem~$\mathcal{S}'$ such that $\mathcal{S} \setminus \mathcal{S}'$ covers~$C$.
\item \textsc{min feedback arc set}. Given a directed graph~$G(V,E)$, determine a minimum cardinality set $E' \subseteq E$, such that any cycle of~$G$ has at least one edge in~$E'$.
\end{itemize}

\end{document}